\newtheorem{theorem}{Theorem}
\newtheorem{lemma}{Lemma}
\newdefinition{remark}{Remark}
\newtheorem{example}{Example}
\newdefinition{definition}{Definition}
\newtheorem{observation}{Observation}
\newtheorem{problem}{Problem}
\newtheorem{corollary}{Corollary}
\newcommand{\rsp}{\textsc{Robot Scheduling}~}
\begin{document}

\begin{frontmatter}



\title{Collision-Free Robot Scheduling\footnote{A Preliminary Version of this paper appears at ALGOWIN 2024}}


\author[ST]{Duncan Adamson}
\author[LRC,CS]{Nathan Flaherty}
\author[CS]{Igor Potapov}
\author[CS]{Paul G. Spirakis}

\affiliation[ST]{organization={Department of Computer Science, University of St Andrews}, addressline = {North Haugh}, city={KY18 9SX}, country = {United Kingdom} }

\affiliation[LRC]{organization={Leverhulme Research Centre,University of Liverpool},
            addressline={51 Oxford St}, 
            city={Liverpool},
            postcode={L7 3NY}, 
            country={United Kingdom}}

\affiliation[CS]{organization={Department of Computer Science,University of Liverpool},
            addressline={Ashton Street}, 
            city={Liverpool},
            postcode={L69 3BX}, 
            country={United Kingdom}}

\begin{abstract}
    In this paper, we investigate the problem of designing \emph{schedules} for completing a set of tasks at fixed locations with multiple robots in a laboratory. We represent the laboratory as a graph with tasks placed on fixed vertices and robots represented as agents, with the constraint that no two robots may occupy the same vertex at any given timestep. Each schedule is partitioned into a set of timesteps, corresponding to a walk through the graph (allowing for a robot to wait at a vertex to complete a task), with each timestep taking time equal to the time for a robot to move from one vertex to another and each task taking some given number of timesteps during the completion of which a robot must stay at the vertex containing the task. The goal is to determine a set of schedules, with one schedule for each robot, minimising the number of timesteps taken by the schedule taking the greatest number of timesteps within the set of schedules.  We show that this problem is NP-complete for both star graphs (for $k \geq 2$ robots), and planar graphs (for any number of robots). Finally, we provide positive results for path, cycle, and tadpole graphs, showing that we can find an optimal set of schedules for $k$ robots completing $m$ tasks of equal duration of a path of length $n$ in $O(kmn)$, $O(k m n^2)$ time, and $O(k^3 m^4 n)$ time respectively.
\end{abstract}

%
%
%
%
%
%
\end{frontmatter}

\section{Introduction}

Across a wide range of industries, there is an increase in the use of automation. This has led to a wide range of problems relating to the scheduling of autonomous agents within workplaces. This includes spacecraft manufacturing \cite{Liu2023}, Unmanned Aerial Vehicle \cite{qamar2023trmaxalloc}, and vehicle routing \cite{zhang2024automated}.\looseness=-1

In this paper, we are particularly interested in the scheduling of robots within chemistry labs. This is motivated by a significant and expanding body of work concerning robotic chemists. Initial work on these systems focused on building robots performing reactions within fixed environments \cite{granda2018controlling,king2011rise,langner2019ternary,li2015synthesis,macleod2020selfdriving}, however recently Burger et al. \cite{burger2020mobile} have presented a robot capable of moving within a laboratory and completing tasks throughout the space. The works of Burger et al. \cite{burger2020mobile} and Liu et al. \cite{Liu2023} provide the main motivation for this work, namely the problem of moving robots within a laboratory environment (as presented by Burger et al. \cite{burger2020mobile}) while avoiding collisions (as investigated in the manufacturing context by Liu et al. \cite{Liu2023}).\looseness=-1

In addition to physical science motivation, our model and algorithmic results are strongly based on graph theory, in particular, graph exploration. Informally, we model our problem as a graph problem, where robots are represented as agents in the graph, with the goal of finding a set of walks for each robot, allowing every task to be completed without any collisions. Our model of movement for robots within the graph matches the exploration model given by Czyzowicz et al. \cite{CZYZOWICZ201770}, where agents (robots) start at fixed points within the graph, then can move provided that no pair of agents occupy the same vertex in the same timesteps. The primary difference between our model and that of \cite{CZYZOWICZ201770} is that in our setting, the agents are given a schedule from some central system rather than each having to determine the best route separately.\looseness=-1

More general exploration problems have been considered in a variety of settings. Of particular interest to us are the works regarding the efficient exploration of temporal graphs. As in our setting, exploration is, in most cases, centrally controlled, with the primary goal of minimising the number of timesteps required to complete the exploration, corresponding to the length of the longest walk taken by any agent in the graph. Further, having the edge set of the graph change over time is similar to, and indeed can be closely mimicked by, the collision-avoiding condition in our problem, in the sense that the available moves for a given agent change throughout the lifetime of the graph.

There is a large number of results across many settings and variations of the temporal graph exploration problem, including when the number of vertices an agent can visit in one timestep is unbounded \cite{arrighi2023kernelizing,erlebach2022parameterized}, bounded \cite{erlebach_et_al:LIPIcs.ICALP.2019.141,erlebach2021temporal,michail2016traveling}, and for specific graph classes \cite{adamson2022faster,akrida2021temporal,bodlaender2019exploring,bumpus2023edge,deligkas2022optimizing,erlebach2022exploration,erlebach2018faster,taghian2020exploring}.
Particularly relevant to us is the work of Michail and Spirakis \cite{michail2016traveling}, who showed that the problem of determining the fastest exploration of a temporal graph is NP-hard, and, furthermore, no constant factor approximation algorithm exists of the shortest exploration (in terms of the length of the path found by the algorithm, compared to the shortest path exploring the graph) unless $P = NP$. As noted, the change in the structure of temporal graphs is close to the challenges implemented in our graph by agents blocking potential moves from each other.
In terms of positive results, the work of Erlebach et al. \cite{erlebach2021temporal} provided a substantial set of results that have formed the basis for much of the subsequent work on algorithmic results for temporal graph exploration. Of particular interest to us are the results that show that, for temporal graphs that are connected in every timestep, an agent can visit any subset of $m$ vertices in at most $O(n m)$ time, and provide constructions for faster explorations of graphs with $b$ agents and an $(r, b)$-division ($O(n^2 b / r + n r b^2)$ time), and $2 \times n$ grids with $4 \log n$ agents ($O(n \log n)$ time).
\looseness=-1R
\paragraph{\textbf{Our Contributions}}

In this paper, we present a set of results for the $k$-\rsp problem. A short summary is provided in Table \ref{tab:results}. Informally, we define the $k$-\rsp scheduling problem as the problem of assigning schedules (walks on the graph with robots completing every task from a given set), minimising the time needed to complete the schedule.\looseness=-1

We lay out the remainder of this paper as follows.
In Section \ref{sec:preliminaries} we provide the definitions and notation used in the rest of the paper, with the $k$-\rsp problem fully presented in Problem \ref{prob:robot-scheduling}.
In Section \ref{sec:np-hardness} we show that $k$-\rsp is NP-complete for a large number of graph classes, explicitly Complete Graphs, Bipartite Graphs, Star graphs (Theorem \ref{thm:star-hardness}), and Planar graphs (Theorem \ref{thm:planar-hardness}). Finally, Section \ref{sec:algorithmic-results} provides the algorithmic results for this paper, namely an optimal algorithm for constructing a schedule for $k$ robots on a path, cycle and tadpole graphs for tasks with equal duration (Theorems \ref{thm:k-robots-partition-equal-length-task}, \ref{thm:k-robot-cycle} and \ref{thm:k-rsp-on-tadpoles} respectively), and a $k$-approximation algorithm for creating a schedule with $k$ robots on a path graph (Theorem \ref{thm:k-robot-k-approximation}).\looseness=-1

\begin{table}[]
    \centering
    \begin{tabular}{|c|c|c|}
         \hline
         Setting & Result  \\
         \hline
         General graphs, $k \in \mathbb{N}$ & NP-complete\\ &(Theorem \ref{thm:general-hardness})\\
          \hdashline
         Star graphs (and trees), $k \geq 2 $ & NP-complete\\ & (Theorem \ref{thm:star-hardness})\\
         \hdashline
         Planar graphs, $k \in \mathbb{N}$ & NP-complete\\ & (Theorem \ref{thm:planar-hardness})\\
         \hdashline
         Path graphs, with $m$ tasks of equal duration, & Optimal $O(k m n)$ time algorithm\\ &(Theorem \ref{thm:k-robots-partition-equal-length-task})\\
         \hdashline
         Cycle graphs, with $m$ tasks of equal duration & Optimal $O(k m n^2)$ time algorithm \\ &(Theorem \ref{thm:k-robot-cycle})\\
         \hdashline
         Tadpole graph, with $m$ tasks of equal duration & Optimal $O(k^3 m^4, n)$ time algorithm\\ &(Theorem \ref{thm:k-rsp-on-tadpoles})\\
         \hdashline
         Path graphs, $k \in \mathbb{N} $ & k-approximation Algorithm\\ & (Theorem \ref{thm:k-robot-k-approximation})\\
         \hline
    \end{tabular}
    \caption{Our results for $k$-\rsp for different graph classes and values of $k \in \mathbb{N}$.\looseness=-1}
    \label{tab:results}
\end{table}

\section{Preliminaries}
\label{sec:preliminaries}

For the remainder of this paper, we define graphs as a tuple containing a set of vertices $V$ and a set of edges $E \subseteq V \times V$. A \emph{walk} in a graph $G$ of length $\ell$ is a sequence of $\ell$ edges such that the second vertex in the $i^{th}$ edge is the first vertex in the $(i + 1)^{th}$ edge, i.e. a sequence of the form $(v_1, v_2), (v_2, v_3), \dots, (v_{\ell - 1}, v_{\ell})$. Any walk $w$ can visit the same vertex multiple times and may use the same edge multiple times. Given a walk $w = (v_1, v_2), (v_2, v_3), \dots, (v_{\ell - 1}, v_{\ell})$, we denote by $\vert w \vert$ the total number of edges in $w$, and by $w[i]$ the $i^{th}$ edge in $w$. In this paper, we also allow walks to contain self-adjacent moves, i.e. moves of the form $(v_i, v_i)$ for every vertex in the graph. We do so to represent remaining at a fixed position for some length of time.
Given a pair of naturals $i, j \in \mathbb{N}$ where $i \leq j$, we denote by $[i, j]$ the set $\{i, i + 1, \dots, j\}$. For a given walk $w$, we denote by $w[i, j]$ the walk $w[i], w[i + 1], \dots, w[j]$.\looseness=-1

In this problem, we consider a set of agents, which we call \emph{robots}, moving on a given graph $G = (V, E)$ and completing a set of tasks $\mathcal{T} = \{t_1, t_2, \dots, t_m\}$. As mentioned in our introduction, this problem originates in the setting of lab spaces, particularly in the chemistry setting. As such, our definitions of robots and tasks are designed to mimic those found in real-world problems. We associate each task with a vertex on which it is located and the duration required to complete the task. We do not allow tasks to be moved by a robot, a task can only be completed by a single robot remaining at the station for the entire task duration, and any robot may complete any number of tasks, with no restrictions on which task a robot can complete. This requirement reflects the motivation from chemistry, where tasks reflect reactions that must be done within an exact time frame and at a fixed workstation.\looseness=-1

Formally, we define a task $t_i$ as a tuple $(v_i,d_i)$ where $d_i$ is the \emph{duration} of the task, and $v_i$ is the vertex at which the task is located. We use $\vert t_i \vert$ to denote the duration of the task $t_i$. In general, the reader may assume that for a graph $G = (V, E)$ containing the vertex set $V = \{v_1, v_2, \dots, v_n\}$, the notation $i_{t}$ is used to denote the index of the vertex at which task $t = (v_{i_t},d)$ is located. This will be specified throughout the paper where relevant.\looseness=-1

To complete tasks, we assign each robot a \emph{schedule}, composed of an alternating sequence of walks and tasks. We note that each schedule can begin and end with either a walk and a walk, a walk and a task, a task and a walk, or a task and a task. We treat each schedule as a set of commands to the robot, directing it within a given time frame. In this way, we partition the schedule into a set of timesteps, with each timestep allowing a robot to move along one edge or complete some fraction of a task, with a task $t$ requiring exactly $\vert t \vert$ timesteps to complete. We call the time span of a schedule the total number of timesteps required to complete it. The \emph{time span} of the schedule $C$ containing the walks $w_1, w_2, \dots, w_{p}$ and tasks $t_1, t_2, \dots, t_m$  is given by $\vert C \vert = \left( \sum_{i \in [1, p]} \vert w_i \vert \right) + \left(\sum_{j \in [1, m]} \vert t_j \vert \right)$. For a set of schedules $\mathcal{C}$ the time span is given by the maximum time span of all schedules in $\mathcal{C}$. Given a walk $w$ directly following the task $t$ in the schedule $C$, we require that the first edge traversed in $w$ begins at the vertex $v_{i_t}$ on which $t$ is located. Similarly, we require that the task $t'$ following the walk $w'$ in the schedule $C$ is located on the last vertex in the last edge in $w'$. We additionally assume that the robot remains on the last vertex visited in the schedule.\looseness=-1

The \emph{walk representation} $\mathcal{W}(C)$ of a schedule $C$ is an ordered sequence of edges formed by replacing the task $t_i = (v_i,d)$ in $C$ with a walk of length $\vert t_i \vert = d$ consisting only of the edge $(v_{i}, v_i)$, then concatenate the walks together in order. Note that $\vert \mathcal{W}(C) \vert = \vert C \vert$. For a given robot $R$ assigned schedule $C$, in timestep $i$ $R$ is located on the vertex $v \in V$ that is the end vertex of the $i^{th}$ edge in $\mathcal{W}(C)$, i.e., the vertex $v$ such that $\mathcal{W}(C)[i] = (u, v)$. We require the first vertex in the walk representation of any schedule $C$ assigned to robot $R$ to be the \emph{starting vertex} of $R$, i.e. some predetermined vertex representing where $R$ starts on the graph. 
If the schedule $C$ containing the task $t$ is assigned to robot $R$, we say that $t$ is \emph{assigned} to $R$.\looseness=-1

Given a set of schedules $\mathcal{C} = (C_1, C_2, \dots, C_{k})$ for a set of $k$ robots $R_1, R_2, \dots$, $R_k$, and set of tasks $\mathcal{T} = (t_1, t_2, \dots, t_m)$. we say that $\mathcal{C}$ is \emph{task completing} if for every task $t \in \mathcal{T}$ there exists exactly one schedule $C_i$ such that $t \in C_i$. We call $\mathcal{C}$ \emph{collision-free} if there is no timestep where any pair of robots occupy the same vertex or traverse the same edge. Formally, $\mathcal{C}$ is collision-free if, for every $C_i, C_j$ where $i \neq j$ and time-step $s \in [1, \vert C_i \vert]$, $\mathcal{W}(C_i)[s] = (v, u)$ and $\mathcal{W}(C_j)[s] = (v', u')$ satisfies $u \neq u'$, $v \neq v'$ and $(v, u) \neq (u', v')$.\looseness=-1

For the remainder of this paper, we assume every robot in the graph is assigned exactly 1 schedule. Given 2 sets of schedules $\mathcal{C}$ and $\mathcal{C}'$, we say $\mathcal{C}$ is \emph{faster} than $\mathcal{C'}$ if $\max_{C_i \in \mathcal{C}} \vert C_i \vert < \max_{C_j' \in \mathcal{C}'} \vert C_j' \vert$. 
Given a graph $G = (V, E)$, set of $k$ robots $R_1, R_2, \dots, R_k$ starting on vertices $sv_1, sv_2, \dots, sv_k$, and set of tasks $\mathcal{T}$, a \emph{fastest} task-completing, collision-free set of $k$-schedules is the set of schedules $\mathcal{C}$ such that any other set of task-completing, collision-free schedules is no faster than $\mathcal{C}$. Note that there may be multiple such schedules.\looseness=-1

\begin{problem}[$k$-\rsp]
    \label{prob:robot-scheduling}
    Given a graph $G = (V, E)$, set of $k$ robots $R_1, R_2, \dots, R_k$ starting on vertices $sv_1, sv_2, \dots, sv_k$, and set of tasks $\mathcal{T}$, what is the fastest task-completing, collision-free set of $k$-schedules $\mathcal{C} = (C_1, C_2, \dots, C_k)$ such that $C_i$ can be assigned to $R_i$, for all $ i \in [1, k]$?\looseness=-1
\end{problem}

 We can rephrase $k$-\rsp as a decision problem by asking, for a given time-limit $L$, does there exist some task-completing, collision-free set of $k$-schedules $\mathcal{C} = (C_1, C_2, \dots, C_k)$ such that $C_i$ can be assigned to $R_i$ and $\vert C_i \vert \leq L$, for all $i \in [1, k]$?\looseness=-1

\begin{example}
An example of a task-fulfilling set of schedules for the graph shown in Figure \ref{fig:example} is
\begin{align*}
   \mathcal{C} = &\{ \left( [(v_7,v_8),(v_8,v_5)],(v_5,5) \right),\\
   &\left([(v_9,v_6),(v_6,v_3),(v_3,v_2)],(v_2,3),[(v_2,v_1),(v_1,v_4)],(v_4,2)\right)\}
\end{align*}
 which has a time span of 10 . However the optimal set of schedules in this case would be: 
 \begin{align*}
\left\{ ([(v_7,v_4)],(v_4,2),[(v_4,v_1),(v_1,v_2)],(v_2,3)),
           ([(v_9,v_6),(v_6,v_5)],(v_5,5)) 
\right\}
\end{align*}
which has a time span of $8$.\looseness=-1
\end{example}


\begin{figure}
    \centering
    \label{fig:example}
\begin{tikzpicture}[scale=0.8]
    \tikzset{vertex/.style = {shape=circle,draw,minimum size=2.5em}}
    \tikzset{edge/.style = {-}}
    \node[vertex,draw=red](v_4) at (0,4){$v_4,{\color{red}2}$};
    \node[vertex](v_1) at (0,6){$v_1$};
    \node[vertex, draw=red](v_5) at (2,4){$v_5,{\color{red}5}$};
    \node[vertex](v_8) at (2,2){$v_8$};
    \node[vertex](v_7) at (0,2) {$v_7$};
    \node [vertex, draw =blue, inner sep=12pt] at (v_7){}; 
    \node[vertex](v_6) at (4,4) {$v_6$};
    \node[vertex](v_9) at (4,2) {$v_9$};
    \node [vertex, draw =blue, inner sep=12pt] at (v_9){}; 
    \node[vertex,draw=red](v_2) at (2,6) {$v_2,{\color{red}3}$};
    \node[vertex](v_3) at (4,6) {$v_3$};
    \draw[edge] (v_4) to (v_1);
    \draw[edge] (v_4) to (v_5);
    \draw[edge] (v_1) to (v_2);R
    \draw[edge] (v_2) to (v_5);
    \draw[edge] (v_2) to (v_3);
    \draw[edge] (v_6) to (v_9);
    \draw[edge] (v_5) to (v_8);
    \draw[edge] (v_8) to (v_9);
    \draw[edge] (v_6) to (v_3);
    \draw[edge] (v_6) to (v_5);
    \draw[edge] (v_8) to (v_7);
    \draw[edge] (v_4) to (v_7);
\end{tikzpicture}
 \caption{A graph with tasks and robots. Blue outlines indicate the positions of robots, red vertices the locations of tasks, and red numbers the durations of the tasks.}
\end{figure}
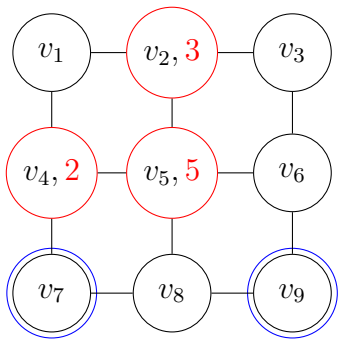


\subsection{Problems used for NP-hardness reductions}

Before providing our results, we provide a quick overview of the problems that are used in Section \ref{sec:np-hardness} as a basis for the hardness. As these are well-known problems, this may primarily be thought of as an overview of the notation used for the remainder of the paper. For more details on this problem, we turn the reader to the textbook of Garey and Johnson \cite{GareyJohnson1979}.\looseness=-1

\paragraph*{$k$-Set Partition}
Given a set of integers $\mathcal{S} = (s_1, s_2, \dots, s_m)$, we define a \emph{partition} of $S$ into $k$ sets as a set of $k$-sets $S_1, S_2, \dots, S_k$ such that $\bigcup_{i \in [1, k]} S_i = \mathcal{S}$ and for any $i, j \in [1, k], i \neq j$, $S_i \cap S_j = \emptyset$. In the case of multiple integers with the same value, we assume that each entry in the set has a unique identifier, allowing this definition to hold. An \emph{exact partition} of $\mathcal{S}$ into $k$ sets $S_1, S_2, \dots, S_k$ is a partition such that $\sum_{s \in S_i} s = \sum_{s' \in S_j} s = \sum_{s'' \in \mathcal{S}} s'' / k$, for every $i, j \in [1, k]$. The $k$-set partition problem asks if an exact partition exists for a given set $\mathcal{S}$ and integer $k$.\looseness=-1

\paragraph*{Hamiltonian Path}
A \emph{Hamiltonian path} for a given graph $G$ is a walk $w$ in $G$ such that each vertex is visited \emph{exactly} once. The Hamiltonian path problem asks if such a path exists for a given graph. For our reduction we consider the more restricted case of finding a Hamiltonian Path starting at a given vertex $v\in G$ \looseness=-1

\section{Hardness Results}
\label{sec:np-hardness}

In this section, we show that the $k$-\rsp problem is NP-complete, even for highly restricted graph classes. Explicitly, we prove NP-hardness results for complete graphs, trees and planar graphs. We note that our hardness result for complete graphs,  and trees hold for at least 2 robots, while that for planar graphs holds even for 1 robot.  As such, the result for trees does not imply the result for planar graphs. The proof of Theorem \ref{thm:general-hardness} follows from these proofs. In order to claim NP-completeness however, we must first prove that the problem is in NP.\looseness=-1

\begin{lemma}
    \label{lem:rsp_in_NP}
    $k$-\rsp is in NP, for any $k \in \mathbb{N}$.\looseness=-1
\end{lemma}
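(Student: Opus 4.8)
The plan is to exhibit a polynomial-size certificate for the decision version of $k$-\rsp and a polynomial-time verification procedure. Given an instance $(G, R_1, \dots, R_k, \mathcal{T}, L)$, the natural certificate is the set of schedules $\mathcal{C} = (C_1, \dots, C_k)$ itself, presented in their walk representations $\mathcal{W}(C_1), \dots, \mathcal{W}(C_k)$. The first step is to argue that such a certificate can be taken to have polynomial size: since we only need to consider schedules with $\vert C_i \vert \leq L$, each walk representation is a sequence of at most $L$ edges, so the whole certificate has size $O(kL)$, which is polynomial in the size of the input once we observe that $L$ is given in unary or, more carefully, that it suffices to consider $L$ bounded by a polynomial in $n$, $k$ and $\sum_{t \in \mathcal{T}} \vert t \vert$ (any longer schedule can be truncated, as a robot never needs to revisit configurations once all its assigned tasks are done). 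I would make this bound explicit so that the certificate length is genuinely polynomial in the instance encoding.

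Next I would spell out the verification checks, each of which is a straightforward scan over the certificate. (i) \emph{Well-formedness}: each $\mathcal{W}(C_i)$ is a walk in $G$ (consecutive edges share the appropriate vertex, every listed edge is in $E$ or is a self-loop $(v,v)$), and it begins at $sv_i$. (ii) \emph{Length}: $\vert \mathcal{W}(C_i) \vert \leq L$ for every $i$. (iii) \emph{Collision-freeness}: for every pair $i \neq j$ and every timestep $s \leq \min(\vert C_i\vert, \vert C_j\vert)$, writing $\mathcal{W}(C_i)[s] = (v,u)$ and $\mathcal{W}(C_j)[s] = (v',u')$, we have $u \neq u'$, $v \neq v'$ and $(v,u) \neq (u',v')$; this is $O(k^2 L)$ comparisons. (iv) \emph{Task completion}: for each task $t = (v_{i_t}, d) \in \mathcal{T}$, there is exactly one schedule in which a contiguous block of $d$ self-loops $(v_{i_t}, v_{i_t})$ is designated as the execution of $t$, and these designated blocks are disjoint within each schedule. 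To make check (iv) clean, I would have the certificate annotate each self-loop in $\mathcal{W}(C_i)$ with the task (if any) it is serving, so that verifying "every task served exactly once by a block of the right length at the right vertex" is a single pass; this annotation adds only $O(kL \log m)$ bits.

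Putting these together, the verifier runs in time polynomial in $\vert G \vert + k + \sum_{t} \vert t \vert + L$, and accepts iff the certificate encodes a task-completing, collision-free set of $k$-schedules each of time span at most $L$. Conversely, if such a set of schedules exists, truncating each schedule at time $L$ (they already have time span $\leq L$) yields a valid certificate. Hence $k$-\rsp $\in$ NP.

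\textbf{Main obstacle.} The only genuinely non-routine point is step one: justifying that we may restrict attention to certificates of polynomial length. If $L$ (or the task durations) were given in binary and allowed to be exponentially large, a literal edge-by-edge walk representation would be too long; there I would instead argue that an optimal (or any feasible) schedule can be compressed — long stretches of waiting at a vertex or repeated sub-walks can be encoded succinctly, and the number of "interesting" events (task starts/ends, moves forced by collision avoidance) is polynomially bounded — so a polynomial-size description still exists and can be expanded for verification, or alternatively observe that without loss of generality $L \le \mathrm{poly}(n,k) + \sum_{t\in\mathcal T}|t|$. Everything else is a bookkeeping exercise over the definitions of walk representation, collision-freeness, and task-completion given in Section~\ref{sec:preliminaries}.
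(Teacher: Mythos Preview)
Your proposal is correct and follows the same approach as the paper: take the schedules themselves as the certificate and verify task-completion, collision-freeness, and the time bound by simulation. The paper's own proof is a one-line ``simulate the solution'' argument and does not even address the certificate-size subtlety you flag; your discussion of bounding $L$ (or compressing waiting blocks) is strictly more careful than what appears there.
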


\begin{proof}
    Observe that given any solution to $k$-\rsp, we can verify the correctness in polynomial time 
    (relative to the size of input) by simulating the solution. Hence, the problem is in NP.\looseness=-1

\end{proof}
\begin{theorem}
 \label{thm:clique-hardness}
 $k$-\rsp on complete graphs is NP-complete for any $k \geq 2$.\looseness=-1
 \end{theorem}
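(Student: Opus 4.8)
The plan is to reduce from $k$-Set Partition, which is NP-complete for $k \geq 2$. Given an instance $\mathcal{S} = (s_1, \dots, s_m)$ with target sum $T = \sum_i s_i / k$, I build a complete graph $K_n$ on $n = m + k$ vertices. Place one task $t_i = (v_i, s_i)$ of duration $s_i$ on each of the first $m$ vertices $v_1, \dots, v_m$, and place the $k$ robots on the remaining $k$ vertices $u_1, \dots, u_k$ (one robot per vertex, so there are no collisions at the start). I then ask the decision version of $k$-\rsp with time limit $L = T + 1$.

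\textbf{Forward direction.} Suppose an exact partition $S_1, \dots, S_k$ exists. Assign to robot $j$ the tasks in $S_j$. Since the graph is complete, robot $j$ can move in a single timestep from its start vertex $u_j$ to the first task vertex of $S_j$, complete that task, then move in one timestep to the next, and so on. The walk steps can be hidden by a careful ordering: in fact a robot doing $|S_j|$ tasks needs $1$ step to reach its first task and then, between consecutive tasks, $1$ step each, giving $\sum_{t \in S_j}|t| + |S_j| = T + |S_j|$. To make the bound clean I instead merge the "travel into the first task" with nothing and let each subsequent hop cost $1$; a slightly more careful accounting (or adding dummy padding so every $S_j$ has the same cardinality, or just setting $L = T + m$) makes the completion time at most $T + (\text{number of hops})$. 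The key point is that the \emph{working} time $\sum_{t\in S_j}|t| = T$ is the same for every robot, so the makespan is $T$ plus a controllable additive term independent of the partition; choosing $L$ to be exactly that value gives a feasible schedule. Collisions are avoidable because $K_n$ has enough room — at worst one stipulates that robots traverse disjoint vertex sets, or time-shifts their single travel steps.

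\textbf{Backward direction.} Conversely, suppose a collision-free, task-completing set of schedules with makespan $\le L$ exists. Each task must be assigned to exactly one robot, inducing a partition $S_1, \dots, S_k$ of the tasks (hence of $\mathcal{S}$). Robot $j$ must spend at least $\sum_{t \in S_j} |t|$ timesteps stationary completing its tasks, plus at least one travel step (it does not start on a task vertex), so $|C_j| \ge \sum_{t \in S_j}|t| + 1$. Combined with $|C_j| \le L = T+1$ this forces $\sum_{t \in S_j} |t| \le T$ for every $j$; since the $k$ sums are nonnegative and total $kT$, each equals exactly $T$, so the partition is exact. Membership in NP is Lemma \ref{lem:rsp_in_NP}.

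\textbf{The main obstacle} I expect is getting the additive travel-overhead constant to line up exactly so that the threshold $L$ separates "exact partition" from "no exact partition" — the naive reduction leaks a $|S_j|$-dependent term into the makespan, and since the $|S_j|$ can differ across robots this term is not constant. The fix is one of: (i) pad $\mathcal{S}$ with $m(k-1)$ extra zero-duration (or unit-duration, with a compensating adjustment of $T$) tasks so that every block can be forced to the same cardinality; (ii) give each robot a private "home" vertex and require it to return, converting the count to something uniform; or (iii) most simply, observe that in $K_n$ a robot can visit $r$ task vertices in $r+1$ timesteps regardless of which ones, set $L = T + \lceil m/k \rceil + 1$ or similar, and check the arithmetic still forces the sums to be equal. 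Verifying that collisions can always be routed around in the complete graph (which is easy, since any two distinct vertices are adjacent and $n$ is large) is routine but must be stated. I would also double-check the edge-collision condition $(v,u) \ne (u',v')$ does not create a hidden obstruction when two robots swap — but again, with $n = m+k$ vertices and only $k \le n$ robots there is always a free vertex to detour through.
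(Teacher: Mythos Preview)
Your reduction idea and graph construction are exactly those of the paper, but the proposal has a real gap precisely where you flag ``the main obstacle'': with task durations set to $s_i$, robot $j$'s schedule length is $\sum_{s\in S_j} s + |S_j|$, and the $|S_j|$ term is not constant across robots, so no single threshold $L$ cleanly separates exact partitions from non-exact ones. Your suggested fixes (padding, return-to-home, fiddling with $L$) are plausible in spirit but none is carried out, and option (iii) in particular does not work as stated: with $L = T + \lceil m/k\rceil + 1$ a robot doing fewer than $\lceil m/k\rceil$ tasks can afford a part-sum strictly larger than $T$, so the backward implication fails. Relatedly, your backward-direction lower bound $|C_j|\ge \sum_{t\in S_j}|t|+1$ is too weak --- a robot visiting $|S_j|$ distinct task vertices from a non-task start needs at least $|S_j|$ travel steps, not one --- and once you use the correct bound $+|S_j|$ the mismatch with the forward direction becomes explicit.

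The paper resolves this with a one-line trick you are missing: give task $t_i$ duration $s_i - 1$ rather than $s_i$ (assuming WLOG each $s_i\ge 2$). Then the single travel step into each task vertex plus the $s_i-1$ steps of work total exactly $s_i$, so robot $j$'s schedule has length precisely $\sum_{s\in S_j} s$ with no cardinality-dependent overhead. The threshold is simply $L=\sum_{s\in\mathcal{S}} s/k$, and both directions go through directly: in the forward direction each robot walks through its disjoint block of task vertices (hence no collisions, since the blocks partition the task vertices and the start vertices are distinct), and in the backward direction $\sum_j |C_j|\ge kL$ with each $|C_j|\le L$ forces equality everywhere.
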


 \begin{proof}
 We prove this by a reduction from the set partition problem.\looseness=-1

 Given an instance of the set partition problem containing the set of integers $\mathcal{S} = \{s_1, s_2, \dots, s_m\}$ and integer $k \in \mathbb{N}$, we construct a $k$-\rsp instance containing the complete graph $G = (V, E)$. The vertex set $V$ is composed of two sets $V^R$ and $V^T$ where $V^R = (v_1^r, v_2^r, \dots, v_k^r)$ and $V^T = (v_1^t, v_2^t, \dots v^t_m)$, with $V = V^R \cup V^T$ we refer to $V^R$ as the set of \emph{robot vertices}, and $V^T$ as the set of \emph{task vertices}. As $G$ is a complete graph, $E = V \times V$, i.e. the set of all potential edges corresponding to pairs of vertices $\{(v, u) \mid v, u \in V\}$. We construct the set of tasks $t_1, t_2, \dots, t_m$ where $t_i$ has a time-span of $s_i - 1$ and is located on vertex $v_{i}^t$. Finally, we construct $k$ robots $R_1, R_2, \dots, R_k$ with $R_i$ initially located on vertex $v^r_i$.\looseness=-1

 We claim that there exists a task-completing conflict-free schedule requiring $\sum_{s \in \mathcal{S}} s/k$ time if and only if there exists an exact partition of $S$ into $k$ sets. In one direction, observe that given some exact partition $\mathcal{S}_1, \mathcal{S}_2, \dots, \mathcal{S}_k$ we can construct a schedule $\mathcal{C} = (C_1, C_2, \dots, C_k)$ where, if $\mathcal{S}_i = (S_{i_1}, S_{i_2}, \dots, S_{i_{m_i}})$ then $C_i = ((v^r_i, v^t_{i_1}), t_{i_1}, (v^t_{i_1}, v^t_{i_2}), t_{i_2}, \dots, t_{i_{m_{i} - 1}}, (v^t_{i_m - 1}, v^t_{i_m}), t_{i_m})$. First, observe that, as $\mathcal{S}_1, \mathcal{S}_2, \dots, \mathcal{S}_k$ is an exact partition of $\mathcal{S}$, each vertex in $G$ is visited exactly once, and therefore the set of schedules $\mathcal{C}$ is conflict-free. Further, the time required to complete the schedule $C_i$ corresponds to the time to move between each vertex in the schedule and the time to complete each task. As the task $t_i$ has a duration of $s_i - 1$, then the time for the robot $R$ assigned task $t_i$ in its schedule requires $1$ timestep to reach $v^t_i$ from the previous vertex (either the previous task or the starting vertex), and $s_i - 1$ timesteps to complete $t_i$, the total cost of completing the schedule $C_i$ is equal to $\sum_{s \in \mathcal{S}_i} s$, and thus the total time to complete the set of schedules $\mathcal{C}$ is $\sum_{s \in \mathcal{S}} s / k$.\looseness=-1

 In the other direction, given some set of schedules $\mathcal{C} = (C_1, C_2, \dots, C_k)$ such that $\mathcal{C}$ takes $\sum_{s \in \mathcal{S}} s / k$ timesteps to complete, we can construct a partition of $\mathcal{S}$ by making the sets $\mathcal{S}_1, \mathcal{S}_2, \dots, \mathcal{S}_k$ where $\mathcal{S}_i$ contains the integers in $\mathcal{S}$  corresponding to the tasks completed in $C_i$. Note that as each vertex can be reached in a single timestep from any starting vertex, the total cost of the schedule $C_i$ completing $m_i$ tasks is equal to $m_i$ plus the length of the tasks, equal to $\sum_{s \in \mathcal{S}_i} s - 1$, hence the time-span of $C_i$ is $\sum_{s \in \mathcal{S}_i} s =\sum_{s \in \mathcal{S}} s / k$. Additionally, as $\sum_{i \in [1, k]} \vert C_i \vert \geq \sum_{s \in \mathcal{S}} s$, any schedule taking $\sum_{s \in \mathcal{S}} s / k$ must satisfy $\vert C_i \vert = \sum_{s \in \mathcal{S}} s / k$ for every $i \in [1, k]$, completing the reduction.\looseness=-1
\end{proof}

\begin{theorem}
    \label{thm:star-hardness}
    $k$-\rsp is NP-complete on star graphs for $k \geq 2$.\looseness=-1
\end{theorem}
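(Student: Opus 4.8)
The plan is to reduce from $k$-Set Partition, reusing the skeleton of the proof of Theorem~\ref{thm:clique-hardness} but adapting it to the star topology. Membership in NP is already given by Lemma~\ref{lem:rsp_in_NP}, so only hardness needs work. Given an instance of $k$-Set Partition with integers $\mathcal{S} = (s_1,\dots,s_m)$ and parameter $k$, I would build a star $G$ with one central vertex $c$ and $m$ leaf vertices $u_1,\dots,u_m$, placing a task $t_i = (u_i, d_i)$ on leaf $u_i$ for a carefully chosen duration $d_i$ (a scaled-and-shifted version of $s_i$, as below). The robots $R_1,\dots,R_k$ all start somewhere near the centre; since a star only has one non-leaf vertex, the robots must funnel through $c$, which is exactly the device that makes the collision constraint bite. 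The target time limit $L$ is set so that a schedule of length $\le L$ forces the robots to split the tasks into $k$ groups of exactly equal total ``cost,'' which will encode an exact partition.

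The key difficulty, and the reason this is not just a trivial restatement of the complete-graph proof, is the collision constraint at $c$: in the star, every movement between two distinct leaves costs $2$ timesteps (out to $c$ and back) and, crucially, two robots cannot sit at $c$ simultaneously nor swap along the same edge. So I would first argue that in any collision-free schedule the robots can be serialised through the centre with only $O(1)$ extra delay per ``hand-off,'' or — cleaner — I would give the robots distinct starting leaves and show that an optimal schedule never needs two robots at $c$ in the same step because each robot can simply wait one extra timestep at its current leaf; the slack in $L$ absorbs this. Concretely, I would scale durations so that $d_i = 2 s_i \cdot M$ for a large polynomial multiplier $M$ (or add a large constant to each $d_i$), so that the $\pm O(k)$ fudge needed to deconflict centre-crossings is negligible compared to the target, forcing an exact partition in the reduction. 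This is the step I expect to be the main obstacle: getting the arithmetic of the time limit $L$ and the scaling factor exactly right so that (i) any exact partition yields a collision-free schedule of length $\le L$, and (ii) any collision-free schedule of length $\le L$, after accounting for the unavoidable travel overhead and the deconfliction delays, implies all $k$ robots complete task-sets of identical total $s$-weight, hence an exact partition.

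For the forward direction I would take an exact partition $\mathcal{S}_1,\dots,\mathcal{S}_k$, assign robot $R_i$ the tasks indexed by $\mathcal{S}_i$, and have it visit the corresponding leaves in a fixed order, returning to $c$ between consecutive tasks. Since the orderings across robots can be chosen to be staggered (e.g., robot $R_i$ delays its first departure by $i-1$ steps, or more carefully round-robins its use of $c$), no two robots occupy $c$ or an incident edge at the same step; the time span of $R_i$ is then its travel cost plus $\sum_{s\in\mathcal{S}_i} d(s)$, which by the exact-partition property is the same (up to the bounded deconfliction slack built into $L$) for all $i$. For the converse, given a collision-free schedule meeting $L$, I would define $\mathcal{S}_i$ as the $s$-values of the tasks assigned to $R_i$ in its schedule, observe that the total work $\sum_i |C_i| \ge \sum_i (\text{travel}_i + \sum_{s\in\mathcal{S}_i} d(s)) \ge \sum_{s\in\mathcal{S}} d(s) + (\text{minimum total travel})$, and use the tightness of $L$ together with the large scaling factor to squeeze each $|C_i|$ to the common value, forcing $\sum_{s\in\mathcal{S}_i} s$ to be equal across $i$. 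Finally I would remark that a star is a tree, so the same construction establishes NP-completeness for trees, matching the claim in Table~\ref{tab:results}.
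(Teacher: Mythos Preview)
Your reduction from $k$-Set Partition on a star is the right idea and would go through, but the paper handles the centre-collision issue by a much lighter device than your scaling factor $M$. Instead of inflating the durations and absorbing an $O(km)$ deconfliction overhead into slack, the paper sets each task duration to $2s_i-2$ (so every task has \emph{even} length) and gives each robot its own starting leaf. Robot $R_1$ moves immediately while $R_2$ waits one step; since the gap between consecutive centre-visits of a robot is $2+(2s_i-2)=2s_i$, which is always even, $R_1$ occupies the centre only at odd timesteps and $R_2$ only at even timesteps. This parity trick eliminates collisions outright, with no slack at all, and yields the exact target $L=1+\sum_{s\in\mathcal S}s$: a schedule of length $L$ exists iff a perfect $2$-partition exists. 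Your scaling approach is correct but buys its correctness with extra bookkeeping (choosing $M$ large enough, bounding the total deconfliction delay, arguing integrality forces exactness); the paper's parity argument sidesteps all of that and gives a tight bound directly. Either way, the final remark that stars are trees, and hence the corollary for trees follows, is shared.
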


\begin{proof}
    Recall that a star graph is a tree where all but one vertex has degree 1.\looseness=-1

    We prove this statement by a reduction from the set partition problem.
    Assume we are given a set partition instance where $\mathcal{S} = \{s_1, s_2, \dots, s_{m} \}$. We assume, without loss of generality, that $s \geq 2,$ for all $s \in \mathcal{S}$.
    From this instance, we construct the $2$-\rsp instance as follows.
    Let $G = (V, E)$ be a graph containing the set $V = \{v_s\} \cup V^T \cup V^R = \{v_s\} \cup \{v^t_{1}, v^t_2, \dots, v^t_{m}\} \cup \{v^r_1, v^r_2\}$ of $m + 3$ vertices. We call the subset $V^T =  \{v^t_{1}, v^t_2, \dots, v^t_{m}\}$ the \emph{task vertices}, the subset $V^R = \{v^r_1, v^r_2\}$  the \emph{robot vertices} and vertex $v_s$, the \emph{star vertex}. As the names imply, the task vertices contain the tasks, the robot vertices are the start position of the robots, and the star vertex is the central vertex of the graph. The edge set $E$ is defined as $\{(v_s, v_i) \mid v_i \in V \setminus \{v_s\}\}$. We add 2 robots, $R_1$ and $R_2$, placing $R_1$ on $v^r_1$ and $R_2$ on $v^r_2$.\looseness=-1

    We construct the set of tasks $\mathcal{T} = \{t_1, t_2, \dots, t_{m}\}$, defining the task $t_i$ as having a duration of $2 \cdot s_i - 2$, and is located on $v^t_i$. We highlight now that the even length of each task is key to the remainder of our reduction. In brief, we ensure that there exists some schedule where $R_1$ will complete tasks only on even timesteps and $R_2$ only on odd timesteps (in general for $k$ robots, robot $R_i$ completes tasks on timesteps $t+1$ for $t \equiv i \mod k$ for $i \in [k]$ ) . In this way, we avoid collision as $R_1$ will only occupy the star vertex on odd timesteps and $R_2$ on even timesteps.\looseness=-1

    Now, we claim there exists a schedule $\mathcal{C}$ taking $1 + \sum_{s \in \mathcal{S}} s$ time if and only if there exists a perfect partition of $\mathcal{S}$ into $2$ sets.\looseness=-1


    First, we show that given a schedule $\mathcal{C} = \{C_1, C_2\}$ taking $1 + \sum_{s \in \mathcal{S}} s$ time, we can construct $2$ subsets of $\mathcal{S}, S_1, S_2$ such that $\sum_{s \in S_1} s = \sum_{s' \in S_2} s'$. We do so by adding to $S_i$ the entry in $\mathcal{S}$ corresponding to each task completed in $C_i$. As the task $t_i$ requires $s_i - 2$ time to complete, and the robot assigned to it requires $2$ timesteps to reach $v_i$ from the previous vertex, if the schedule $C$ containing $t_i$ is completed in either $\sum_{s \in \mathcal{S}} s$ or $1 + \sum_{s \in \mathcal{S}} s$ timesteps, then the size of the set $S_1$ must be $\sum_{s \in \mathcal{S}} s / 2$, and hence $S_1$ and $S_2$ are a perfect partition of $\mathcal{S}$. In the other direction, given any perfect partition $S_1, S_2$ of $\mathcal{S}$, we can construct a schedule $\mathcal{C} = \{C_1, C_2\}$ taking $1 + \sum_{s \in \mathcal{S}} s$ timesteps by having $r_1$ complete the tasks $\{t_i \mid \forall s_i \in S_1\}$, and $r_2$ complete the tasks $\{t_j \mid \forall s_j \in S_2\}$, after waiting on the starting vertex for 1 timestep. As the time to travel between each task is $2$, and as each task has an even length, $r_1$ will move only on odd timesteps and $r_2$ on even ones, thus this schedule is collision-free and requires $1 + \sum_{s \in \mathcal{S}} s$ timesteps to complete.

    Now, assume that no schedule taking $1 + \sum_{s \in \mathcal{S}} s$ timesteps exists. Then, by the same arguments as above, it must not be possible to form any perfect partition of $\mathcal{S}$ as such a partition would give a schedule taking $1 + \sum_{s \in \mathcal{S}} s$ time. Hence, this statement holds. Similarly, if no perfect partition of $\mathcal{S}$ exists, then no schedule taking $1 + \sum_{s \in \mathcal{S}} s$ timesteps exists, completing the proof.\looseness=-1

\end{proof}

Despite being NP-hard for $k \geq 2$, when we have only one robot the problem becomes trivial.\looseness=-1
\begin{observation}
    1-\rsp can be solved in polynomial time for star graphs.\looseness=-1
\end{observation}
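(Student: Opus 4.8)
The plan is to reduce $1$-\rsp on a star to a pure single-agent routing question and then settle that question with an explicit optimal schedule matched by a lower bound. First I would observe that for $k=1$ the collision-free condition is vacuous, since it quantifies over pairs $C_i, C_j$ with $i \neq j$, of which there are none; hence a fastest task-completing collision-free schedule is just a fastest task-completing schedule for the single robot. Next, I would note that the time span of any schedule equals $\sum_{t \in \mathcal{T}} \vert t \vert$, a quantity fixed by the input, plus the \emph{travel cost}, i.e.\ the number of non-self-loop edges in its walk representation. So it suffices to minimise the travel cost over all task-completing schedules.

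The structural fact I would exploit is that the centre $c$ of the star is a cut vertex: the only walk between two distinct leaves passes through $c$. Using this, I would pin the optimal travel cost to two cheaply computed quantities: the number $\ell$ of distinct leaves carrying at least one task, and the position of the start vertex $sv$ relative to $c$ and to the task-bearing leaves. On the achievability side I would exhibit the schedule that (i) if $sv$ is a task-bearing leaf, first completes every task there; (ii) then visits the remaining task-bearing leaves one at a time, each time going out from $c$ and back to $c$, omitting the return trip only after the last such leaf; and (iii) performs every task located on $c$ during one of the robot's many stops at $c$. A short count gives travel cost $2\ell-1$ when $sv=c$, $2\ell-2$ when $sv$ is a task-bearing leaf, and $2\ell$ when $sv$ is a task-free leaf, with the obvious smaller values in the degenerate cases $\ell\le 1$ (in particular when all tasks sit on $c$, or when there are no tasks). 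For the matching lower bound I would count traversals of the cut edges: each edge $\{c,v\}$ with $v$ task-bearing must be used at least once to enter $v$, and at least twice for all but one leaf (the one the robot finishes at), because the robot must leave $v$ to serve the others; one further traversal is forced when $sv$ is a task-free leaf, since the robot must first reach $c$. This reproduces exactly the travel costs above, so the displayed schedule is optimal.

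Finally I would observe that identifying $\ell$, classifying $sv$, and writing out the (constantly many, short) walks and tasks of the optimal schedule all take time polynomial in the input size, indeed $O(m+n)$ with the schedule stored compactly as an alternating sequence of walks and tasks $(v,d)$; this establishes the observation. The only delicate point is completeness of the lower-bound case analysis on $sv$ together with the degenerate values of $\ell$: these are routine but must be checked so that the stated travel costs are genuinely optimal and not merely attainable. I do not expect any serious obstacle beyond this bookkeeping.
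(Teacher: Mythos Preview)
Your proposal is correct and, in fact, considerably more detailed than what the paper provides: the paper treats this as a bare observation with no accompanying argument, simply noting that the single-robot case on a star is trivial. Your explicit travel-cost analysis (with the case split on $sv$ and the matching cut-edge lower bound) is a valid and clean way to substantiate the claim; the only thing to watch, as you already flag, is the bookkeeping in the degenerate cases $\ell \le 1$, particularly when $sv$ is the unique task-bearing leaf but there are also tasks at the centre (your formula $2\ell-2$ then undercounts by one). None of this threatens the polynomial-time conclusion.
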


\begin{corollary}
    $k$-\rsp is NP-complete for trees.\looseness=-1
\end{corollary}

\begin{theorem}
\label{thm:planar-hardness}
1-\rsp on planar graphs is NP-complete, even when all tasks are of equal duration.
\end{theorem}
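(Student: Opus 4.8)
The plan is to reduce from the restricted Hamiltonian Path problem mentioned in the preliminaries — finding a Hamiltonian path in a planar graph starting at a designated vertex $v$. Since Hamiltonian Path remains NP-complete on planar graphs (indeed on planar cubic graphs, by Garey, Johnson, and Tarjan), and the ``fixed start vertex'' variant is easily seen to be equivalent (guess the start, or attach a pendant), this gives a suitable source problem. Membership in NP is already established by Lemma~\ref{lem:rsp_in_NP}, so only NP-hardness needs proof.

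\textbf{The construction.} Given a planar graph $H = (V_H, E_H)$ with $|V_H| = n$ and designated start vertex $v \in V_H$, build a $1$-\rsp instance on the same planar graph $G = H$ (planarity is preserved since we do not add edges). Place a single task of duration $d$ (say $d = 1$, or any fixed constant) on \emph{every} vertex of $V_H$, and place the single robot $R$ on the start vertex $v = sv_1$. Set the time limit to $L = (n-1) + n\cdot d$: the robot must traverse at least $n-1$ edges to reach all $n$ vertices and must spend $d$ timesteps at each of the $n$ task locations. I claim a schedule of time span $\le L$ exists if and only if $H$ has a Hamiltonian path starting at $v$.

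\textbf{The two directions.} For the forward direction, a Hamiltonian path $v = u_1, u_2, \dots, u_n$ yields the obvious schedule: complete the task at $u_1$, move to $u_2$, complete its task, and so on; this uses exactly $n-1$ edge-moves and $nd$ task-timesteps, hitting $L$ exactly, and with one robot there are no collisions to worry about. For the converse, suppose a schedule $C$ with $|C| \le L$ exists. Since the robot is the only agent, $\mathcal{W}(C)$ is just a walk from $v$; completing all $n$ tasks forces the robot to visit all $n$ vertices and to spend at least $d$ timesteps stationary at each, contributing $nd$ to the time span, leaving at most $n-1$ timesteps for genuine edge traversals. A walk that visits all $n$ vertices using at most $n-1$ non-self-loop edges must visit each vertex exactly once (each new vertex after the first requires at least one fresh traversal), so the sequence of distinct vertices in visitation order is a Hamiltonian path of $H$ starting at $v$. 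One should be careful to argue that the task-completion timesteps and the movement timesteps are cleanly separable in the count — this follows directly from the definition of time span as $\sum |w_i| + \sum |t_j|$.

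\textbf{Main obstacle.} The construction itself is almost trivial; the real content is citing the right hardness result for the source problem and confirming it survives the ``fixed start vertex'' and ``planar'' restrictions simultaneously — the statement explicitly wants planarity \emph{and} equal task durations, and these impose no tension, but I would need to make sure the Hamiltonian-path instance used is genuinely planar (standard: Garey--Johnson--Tarjan) and that reducing general Hamiltonian Path to the fixed-start version keeps the graph planar (attaching a degree-$1$ pendant or iterating over choices of $v$ both do). A minor subtlety worth a sentence: with $d \ge 1$ the self-loop ``waiting'' moves used for task completion are exactly the $\mathcal{W}(C)$ expansion of tasks and must not be double-counted as traversals — the walk-representation bookkeeping in the preliminaries handles this, but the proof should state it explicitly so the inequality $|C| \le (n-1) + nd \Rightarrow$ Hamiltonicity is airtight.
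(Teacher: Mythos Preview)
Your proposal is correct and follows essentially the same reduction as the paper: place a unit-duration task on every vertex of the planar Hamiltonian-path instance, start the single robot at the designated vertex, and ask for a schedule of time span $(n-1)+n\cdot d = 2n-1$. Your treatment is in fact a bit more careful than the paper's --- you spell out why the fixed-start planar Hamiltonian Path problem is hard and why the task/movement timestep counts separate cleanly --- but the underlying argument is identical.
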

\begin{proof}
    To prove NP-hardness we reduce from the Hamiltonian Path problem with fixed starting vertex in planar graphs.\looseness=-1

    Let $G = (V, E)$ be a planar graph where $V = \{v_1, v_2, \dots, v_n\}$. From $G$, we construct a 1-\rsp instance with the graph $G$ and set of tasks $\mathcal{T} = \{T_1, T_2, \dots, T_n\}$ where every task has duration 1 and task $T_i$ is placed on vertex $v_i$.\looseness=-1
    

    Now, a robot $A$ is placed on some vertex $v_s$. Observe that the fastest task-completing collision-free schedule for $G$ with $A$ requires visiting every vertex in $G'$ at least once, then spending one timestep on that vertex. Therefore, if a task-completing collision-free schedule $\mathcal{C}$ has $Sp(\mathcal{C}) = 2n - 1$, then there must exist some path visiting every vertex in $V$ exactly once, as visiting any vertex more than once would require an extra timestep. Hence, given such a schedule, there exists a Hamiltonian path starting at $v_s$ in $G$. By checking if any such schedule taking $2n - 1$ timesteps when $A$ starts at vertex $v_s$ for every $v_s \in V$, we can determine if any Hamiltonian path exists in $G$.\looseness=-1

    In the other direction, given a Hamiltonian path starting at $v_s$, we can construct a schedule taking $2n - 1$ timesteps by stopping at each vertex for a single timestep to complete the associated task. Hence, the reduction holds.\looseness=-1
\end{proof}

%

\begin{theorem}
\label{thm:general-hardness}
$k$-\rsp is NP-complete for any $k \in \mathbb{N}$.\looseness=-1
\end{theorem}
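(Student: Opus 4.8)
The plan is to obtain NP-hardness of $k$-\rsp for every fixed $k \in \mathbb{N}$ by combining the two reductions already established. For $k = 1$, Theorem \ref{thm:planar-hardness} already gives NP-completeness, so nothing further is needed in that case. For $k \geq 2$, Theorem \ref{thm:clique-hardness} (or equivalently Theorem \ref{thm:star-hardness}) gives NP-completeness directly, since both reductions are stated for arbitrary $k \geq 2$ from $k$-\textsc{Set Partition}, which is itself NP-complete for every fixed $k \geq 2$. Thus the statement follows immediately by case analysis on $k$, with membership in NP supplied by Lemma \ref{lem:rsp_in_NP}.

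First I would note the NP membership: by Lemma \ref{lem:rsp_in_NP}, $k$-\rsp is in NP for every $k$, so it suffices to establish NP-hardness. Then I would split into the two cases. In the case $k = 1$, I invoke Theorem \ref{thm:planar-hardness}, which shows that $1$-\rsp is NP-hard already on planar graphs; since planar graphs are a subclass of general graphs, $1$-\rsp on general graphs is NP-hard. In the case $k \geq 2$, I invoke Theorem \ref{thm:clique-hardness}: the reduction from $k$-\textsc{Set Partition} to $k$-\rsp on complete graphs is valid for every $k \geq 2$, and $k$-\textsc{Set Partition} is NP-hard for each fixed $k \geq 2$ (the classical $k=2$ case being ordinary \textsc{Partition}, and larger $k$ following by a standard padding argument, or directly from the literature of Garey and Johnson \cite{GareyJohnson1979}). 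Hence $k$-\rsp is NP-hard, and combined with membership in NP it is NP-complete, for every $k \in \mathbb{N}$.

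The only subtlety — and the one point I would spell out rather than leave implicit — is that "$k$-\rsp is NP-complete for any $k$" must be read as: for each fixed value of $k$, the problem is NP-complete; this is exactly what the two cited theorems deliver, since each is quantified over all admissible $k$. There is no real obstacle here: the theorem is a corollary-style consolidation of the earlier hardness results, and the main work was already carried out in the proofs of Theorems \ref{thm:clique-hardness}, \ref{thm:star-hardness}, and \ref{thm:planar-hardness}. The proof I would write is therefore short: state NP membership, handle $k=1$ via the planar reduction, handle $k \geq 2$ via the complete-graph (or star) reduction, and conclude.
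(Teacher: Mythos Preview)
Your proposal is correct and matches the paper's approach: the paper does not give a separate proof of Theorem~\ref{thm:general-hardness} but simply remarks that it ``follows from these proofs,'' i.e., from Lemma~\ref{lem:rsp_in_NP} together with Theorems~\ref{thm:clique-hardness}, \ref{thm:star-hardness}, and \ref{thm:planar-hardness}. Your case split on $k=1$ versus $k\geq 2$ is exactly the intended argument.
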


%

\section{Algorithmic Results for Path Graphs}
\label{sec:algorithmic-results}

In this section, we present a set of algorithmic results for path graphs. Recall that a graph $G$ is a path if and only if every vertex has a degree at most 2, and there exist exactly 2 vertices with degree 1. Formally, a path $P$ of length $n$ contains the set of vertices $V = \{v_1, v_2, \dots, v_n\}$, and the set of edges $E = \{(v_1, v_2), (v_2, v_3), \dots, (v_{n - 1}, v_n)\}$. 
For the remainder of this section for a given pair of vertices $v_i, v_j$ on a path graph, we say that $v_i$ is \emph{left} of $v_j$ if $i < j$, and that $v_i$ is \emph{right} of $v_j$ if $i > j$.\looseness=-1

In Section \ref{subsec:1-rsp-alg}, we provide an algorithm for finding an optimal schedule for $1$-\rsp on a line. 
 In Section \ref{subsec:2-rsp-alg}, we provide two results regarding 2-\rsp for paths. Explicitly, we provide an algorithm that is optimal when every task has equal duration and a 2-approximation for general length tasks.
In Section \ref{subsec:k-rsp-alg}, we generalise this to give an optimal algorithm for $k$ robots with equal-length tasks and a $k$-approximation in the general setting.\looseness=-1

\subsection{1-\rsp on Path Graphs}
\label{subsec:1-rsp-alg}

In this section, we provide an algorithm for finding the optimal schedule for a single robot on a path. We first provide a sketch of the algorithm, then prove in Lemma \ref{lem:1-rsp-opt} that this algorithm is optimal. Corollary \ref{col:computing-1-rsp-time} shows that the time needed to complete the fastest schedule can be computed via a closed-form expression.\looseness=-1

\paragraph*{1-\rsp Algorithm}

Let $P$ be a path graph of length $n$, let $T = (t_1, t_2, \dots, t_m)$ be a set of tasks, and let $R$ be the single robot starting on vertex $sv = v_{i_s}$. We assume, without loss of generality, that $t_j$ is located on $v_{i_j}$ such that $v_{i_j}$ is left of $v_{i_{j + 1}}$, i.e. $\forall j \in [1, m - 1], i_j < i_{j + 1}$. Note that there may exist some task $t_i$ located on $sv$ without contradiction.
Using this notation, the optimal schedule $\mathcal{C} = \{C\}$ is:\looseness=-1
\begin{itemize}
    \item $C = \{(v_{i_s}, v_{i_s + 1}), \dots$, $(v_{i_m - 1}$, $v_{i_m})$, $t_m$, $(v_{i_m}, v_{i_m - 1})$, $\dots$, $(v_{i_{m + 1} + 1}, v+{i_{m + 1}})$, $t_{m - 1},\dots, (v_{i_1 + 1}, v_{i_1})$, $t_1$ $\}$ if $\vert i_s - i_m \vert \leq \vert i_s - i_1 \vert$.
    \item $C$ $=$ $\{$ $(v_{i_s}, v_{i_s - 1})$, $(v_{i_s - 1}, v_{i_s - 2})$, $\dots$, $(v_{i_1 + 1}, v_{i_1})$, $t_1$, $(v_{i_1}, v_{i_1 + 1})$, $(v_{i_1 + 1}, v_{i_2 + 2})$, $\dots$, $(v_{i_{2} - 1}, v_{i_{2}})$, $t_{2}$, $\dots$, $(v_{i_m - 1}, v_{i_m})$, $t_m$ $\}$ if $\vert i_s - i_m \vert > \vert i_s - i_1 \vert$.\looseness=-1
\end{itemize}

\begin{lemma}
    \label{lem:1-rsp-opt}
    The fastest task-completing schedule for $1$-\rsp on a path graph $P$ of length $n$ with $m$ tasks $T = (t_1, \dots, t_m)$ located on vertices $v_{i_1}, \dots, v_{i_m}$, and a robot $R$ starting on vertex $v_{i_s}$ can be constructed in $O(n)$ time.\looseness=-1
\end{lemma}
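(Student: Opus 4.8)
The plan is to establish two things: first, that the schedule $C$ described above is task-completing and has the claimed time span, and second, that no task-completing schedule can be faster. Both directions rest on a simple structural observation about walks on a path: a robot starting at $v_{i_s}$ that must visit all of $v_{i_1}, \dots, v_{i_m}$ must, at minimum, reach the leftmost required vertex $v_{i_1}$ (or $v_{i_s}$, whichever is further left — but by our indexing $i_1 \le i_m$, so the relevant endpoints are $v_{i_1}$ on the left and $v_{i_m}$ on the right, together with $v_{i_s}$) and the rightmost required vertex $v_{i_m}$. I would first argue that, since there is only one robot, the collision-free condition is vacuous, so the only constraint is that every task be completed; hence the time span of any schedule is (number of movement steps) $+ \sum_{j=1}^m |t_j|$, and the task-duration sum is a fixed additive constant. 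Thus minimising the time span is equivalent to minimising the number of movement edges in a walk from $v_{i_s}$ visiting all task vertices.

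Next I would compute the movement cost of the proposed $C$. In the first case ($|i_s - i_m| \le |i_s - i_1|$), the robot walks from $v_{i_s}$ rightward to $v_{i_m}$, then leftward all the way to $v_{i_1}$, completing each task as it passes (or reaches) the corresponding vertex; the number of movement edges is $(i_m - i_s) + (i_m - i_1)$ when $v_{i_s}$ lies between $v_{i_1}$ and $v_{i_m}$, and one checks the boundary subcases ($v_{i_s}$ left of $v_{i_1}$, or right of $v_{i_m}$) separately — in those degenerate cases one endpoint coincides with $v_{i_s}$. The second case is symmetric: go left to $v_{i_1}$ first, then right to $v_{i_m}$, with cost $(i_s - i_1) + (i_m - i_1)$. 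In each case the cost equals $2\min(|i_s - i_1|, |i_s - i_m|) + |i_1 - i_m|$ appropriately interpreted, i.e. the robot traverses the ``short side'' twice and the ``long side'' once. I would state this as the movement cost and verify the construction visits every task vertex and respects the adjacency/endpoint requirements of a valid schedule (each walk segment between consecutive tasks starts and ends at the right vertices).

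For optimality (the lower bound), I would argue that any walk from $v_{i_s}$ visiting both $v_{i_1}$ and $v_{i_m}$ on a path must use at least $2\min(|i_s-i_1|,|i_s-i_m|) + |i_1-i_m|$ edges: consider the walk restricted to the sub-path between $v_{\min(i_s,i_1)}$ and $v_{\max(i_s,i_m)}$; to visit the endpoint on the near side and the endpoint on the far side starting from an interior point, the robot must cross the edge incident to $v_{i_s}$ on the near side, come back, and then traverse everything out to the far endpoint — a standard ``must backtrack past the start'' argument, formalised by counting, for each edge of the relevant sub-path, how many times it must be traversed (edges strictly on the short side at least twice, edges on the long side at least once). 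This lower bound matches the cost of $C$, so $C$ is optimal. The main obstacle, I expect, is handling the boundary cases cleanly — when $v_{i_s}$ is not strictly between the extreme task vertices, or coincides with one of them, the ``min'' term collapses and the two cases of the algorithm coincide or simplify; I would dispatch these with a short case analysis rather than trying to force a single uniform formula.

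Finally, the $O(n)$ running time is immediate: we read off $i_1$ and $i_m$ in one pass over the (already sorted) task list, compare $|i_s - i_m|$ with $|i_s - i_1|$, and emit the corresponding schedule, whose description has length $O(n)$ since each edge of the path is written at most a constant number of times.
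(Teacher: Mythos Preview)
Your proposal is correct and follows essentially the same approach as the paper: reduce to minimising the movement walk, then argue that any walk visiting both extreme task vertices from $v_{i_s}$ must traverse the short side twice and the long side once. Your edge-counting lower bound is slightly more explicit than the paper's decomposition into subsequences $F$ and $F'$, and the paper additionally notes that an unsorted task list can be radix-sorted in $O(n)$, but otherwise the arguments coincide.
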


\begin{proof}
    We prove this statement by showing that the construction above is correct. Note that if $T$ is not ordered, then we can sort the list by position of the tasks in $O(n)$ using a radix sort. Observe that any task-completing schedule must have the robot completing every task. Therefore, the fastest schedule will correspond to the shortest walk visiting every vertex containing a task. We further assume, without loss of generality, that $i_1 \leq s \leq i_m$, as the fastest schedule for any $1$-\rsp instance where $s < i_1$ (respectively, $s > i_m$) must start with the robot moving from $v_{i_s}$ to $v_{i_1}$, and thus this path can be appended to the final solution.\looseness=-1
    
    Observe that if $v_{i_s}$ is neither $v_{i_1}$ nor $v_{i_m}$, $R$ must visit some subset of vertices more than once. Further, any task-completing schedule must visit both $v_{i_1}$ and $v_{i_m}$ at least once. Therefore, there must exist some subsequence $F$ of the edges in the optimal schedule $C$ corresponding to a walk between $v_{i_1}$ and $v_{i_m}$. Additionally, there must be some subsequence $F'$ corresponding to a walk in the optimal schedule $C$ ending before the first edge in $F$ and corresponding to a walk from $v_s$ to either $v_{i_1}$, or $v_{i_m}$. Therefore, as the above construction only contains these walks, one must be minimal. Now, note that if $\vert i_s - i_m \vert \leq \vert i_s - i_1 \vert$, then the shortest walk from $v_{i_s}$ to $v_{i_m}$ is shorter than the shortest walk from $v_{i_s}$ to $v_{i_1}$, and thus the schedule starting with the walk from $v_{i_s}$ to $v_{i_m}$ is shorter than the schedule starting with the walk from $v_{i_s}$ to $v_{i_1}$. Otherwise, the schedule starting with the walk from $v_{i_s}$ to $v_{i_1}$ is shorter than the schedule starting with the walk from $v_{i_s}$ to $v_{i_m}$.\looseness=-1

\end{proof}

\begin{corollary}
    \label{col:computing-1-rsp-time}
    The  time span of the fastest task-completing schedule for $1$-\rsp on a path graph $P$ of length $n$ with $m$ tasks $T = (t_1, \dots, t_m)$ located on vertices $v_{i_1}, \dots, v_{i_m}$ and a robot $R$ starting on vertex $v_s$ is \[\min(\vert s - i_1 \vert, \vert s - i_m \vert) + i_m - i_1 + \sum\limits_{t \in T} t.\]\looseness=-1
\end{corollary}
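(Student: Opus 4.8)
The plan is to read off the time span directly from the two candidate schedules described in Lemma~\ref{lem:1-rsp-opt}, since that lemma already tells us one of them is optimal. First I would reduce to the case $i_1 \leq s \leq i_m$: if $s < i_1$ (resp.\ $s > i_m$) then any task-completing walk must first travel from $v_s$ to $v_{i_1}$ (resp.\ $v_{i_m}$), a segment of length $\vert s - i_1\vert$ (resp.\ $\vert s-i_m\vert$), after which the robot is at an extreme task vertex and the remaining problem has cost $i_m - i_1 + \sum_{t\in T} t$; one checks the claimed formula collapses correctly in these boundary cases because then $\min(\vert s-i_1\vert,\vert s-i_m\vert) = \vert s - i_1\vert$ (or $\vert s-i_m\vert$) and the remaining $i_m-i_1$ term accounts for the single traversal of the task span.

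For the main case $i_1 \leq s \leq i_m$, I would compute the length of each of the two schedules from Lemma~\ref{lem:1-rsp-opt}. The ``go right first'' schedule walks from $v_{i_s}$ to $v_{i_m}$ (length $i_m - s$), then sweeps left from $v_{i_m}$ to $v_{i_1}$ (length $i_m - i_1$), interleaving all the tasks; its walk-length total is $(i_m - s) + (i_m - i_1) = \vert s - i_m\vert + (i_m - i_1)$. Symmetrically the ``go left first'' schedule has walk-length $(s - i_1) + (i_m - i_1) = \vert s - i_1 \vert + (i_m - i_1)$. Adding the task durations $\sum_{t\in T} t$, which every schedule must pay, and taking the smaller of the two (which by Lemma~\ref{lem:1-rsp-opt} is the optimum) gives exactly $\min(\vert s - i_1\vert, \vert s - i_m\vert) + i_m - i_1 + \sum_{t\in T} t$. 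The only point needing a word of care is that in the sweeping phase no vertex strictly between $v_{i_1}$ and $v_{i_m}$ is traversed more than necessary — the monotone sweep visits each such vertex exactly once — so the walk length of the sweep is precisely $i_m - i_1$ and not more.

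I do not expect a serious obstacle here: the corollary is essentially bookkeeping on top of Lemma~\ref{lem:1-rsp-opt}, and the only mild subtlety is handling the degenerate configurations ($s$ outside $[i_1, i_m]$, or $i_1 = i_m$, or a task located on $sv$) uniformly so that the single closed-form expression is valid in all of them. I would dispatch these by the reduction in the first paragraph and by noting the formula is continuous in the obvious sense across the boundary $s = i_1$ and $s = i_m$.
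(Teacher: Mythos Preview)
Your proposal is correct and is exactly the computation the paper intends: the corollary is stated without proof immediately after Lemma~\ref{lem:1-rsp-opt}, and your derivation---computing the walk lengths $\lvert s-i_m\rvert + (i_m-i_1)$ and $\lvert s-i_1\rvert + (i_m-i_1)$ of the two candidate schedules, adding $\sum_{t\in T} t$, and taking the minimum---is the natural unpacking of that lemma. Your handling of the boundary cases $s<i_1$ and $s>i_m$ is also sound and matches the reduction already sketched in the proof of Lemma~\ref{lem:1-rsp-opt}.
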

 \subsection{2-\rsp on Path Graphs}
 \label{app:2-rsp-alg}

 In this section we discuss $2$-\rsp on a path. First, we provide a new algorithm generalising the above algorithm for $1$-\rsp. In Section \ref{subsec:k-rsp-alg}, we will further generalise this to $k$-\rsp on a path; however, it is valuable to consider $2$-\rsp first, both to illuminate the main algorithmic ideas and to provide a base case for later inductive arguments. As in Section \ref{subsec:1-rsp-alg}, we start by providing an overview of our algorithm, which we call the \emph{partition algorithm}. In Lemma \ref{lem:2-partition-equal-length-tasks}, we show that when all tasks have equal duration, this algorithm is optimal. Finally, in Theorem \ref{thm:2-partition-any-length-tasks}, we show that when there are no bounds on the length of the tasks, this algorithm returns a schedule that has a time-span a factor of at most $2$ greater than the time-span of the fastest task-completing collision-free schedule.\looseness=-1

 \paragraph*{The Partition Algorithm}

 Let $P$ be a path graph of length $n$, let $T = (t_1, t_2, \dots, t_m)$ be the set of tasks, and let $R_{L}$ and $R_{R}$ be the pair of robots starting on vertices $sv_L = v_{i_L}$ and $sv_R = v_{i_R}$ respectively. We call $R_{L}$ the \emph{left robot} and $R_R$ the \emph{right robot}, with the assumption that $sv_L$ is left of $sv_R$. We denote by $i_j$ the index of the vertex containing the task $t_j$, and assume that $i_j < i_{j + 1}$, for every $j \in [1, m - 1]$. We use the notation $P_{i,j}$ for $i,j \in [n], i < j$ to refer to the induced subgraph of $P$ with vertex set $[v_i,v_j]$. For notation, let $C_1(P, T, sv)$ return the optimal schedule for a single robot starting at $sv$ on the path $P$ for completing the task set $T$.\looseness=-1

 We construct the schedule by partitioning the tasks into 2 sets, \newline $T_L = (t_1, t_2, \dots, t_{q})$ and $T_R = (t_{q + 1}, t_{q + 2}, \dots, t_{m})$. We determine the value of $q$ by finding the value which minimises $\max(\vert C_1(P_{1, \max(i_L, i_{q})}, (t_1, t_2, \dots, t_{q}), sv_{L}) \vert,\newline \vert C_1(P_{\min(i_{q + 1}, i_R), m}, (t_{q + 1}, t_{q + 2}, \dots, t_m), sv_R) \vert )$. We will use $C_2(P,T,(sv_L, sv_R))$ to denote the schedule returned by this process.\looseness=-1

 \begin{example}
 An example of execution of the partition algorithm is shown in Figure \ref{fig:2partition_alg_example}. For this example, the left robot (starting on vertex 5) will be assigned the schedule $\left( [(5,4),(4,3)], (3,1), [(3,2),(2,1)],(1,1) \right)$ and the right robot has the schedule $ \left( (6,2),[(6,5),(5,4)], (4,1)  \right )$.\looseness=-1
 \end{example}
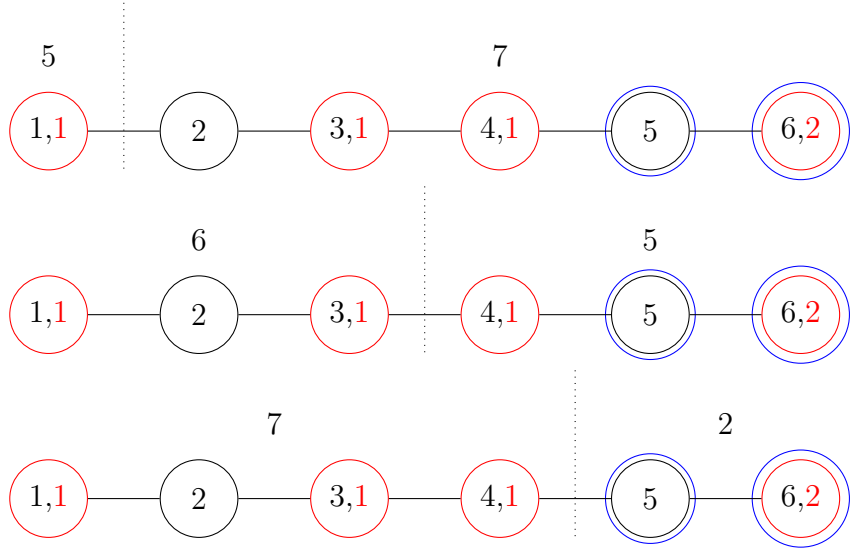
\begin{figure}[h]
\centering
    \begin{tikzpicture}
        \tikzset{vertex/.style = {shape=circle,draw,minimum size=2.5em}}
         
        \tikzset{edge/.style = {-}}
        \node[vertex,draw=red](1) at  (0,0){1,{\color{red}1}};
        \node[vertex](2) at (2,0){2};
        \node[vertex,draw=red](3) at (4,0){3,{\color{red}1}};
        \node[vertex,draw=red](4) at (6,0){4,\color{red}{1}};
        \node[vertex](5) at (8,0){5};
        \node [vertex, draw =blue, inner sep=12pt] at (5){}; 
        \node[vertex,draw=red](6) at (10,0){6,{\color{red}2}};
        \node [vertex, draw =blue, inner sep=13pt] at (6){}; 
        \draw[edge] (1) to (2); 
        \draw[edge] (2) to (3); 
        \draw[edge] (3) to (4); 
        \draw[edge] (4) to (5); 
        \draw[edge] (5) to (6); 
        \draw[dotted] (1,-0.5) to (1,1.75);
        \node at (0,1) {5};
        \node at (6,1) {7};
    \end{tikzpicture}
    \begin{tikzpicture}
        \tikzset{vertex/.style = {shape=circle,draw,minimum size=2.5em}}
         
        \tikzset{edge/.style = {-}}
        \node[vertex,draw=red](1) at  (0,0){1,{\color{red}1}};
        \node[vertex](2) at (2,0){2};
        \node[vertex,draw=red](3) at (4,0){3,{\color{red}1}};
        \node[vertex,draw=red](4) at (6,0){4,\color{red}{1}};
        \node[vertex](5) at (8,0){5};
        \node[vertex, draw=blue, inner sep=12pt] at (5){}; 
        \node[vertex,draw=red](6) at (10,0){6,{\color{red}2}};
        \node[vertex, draw=blue, inner sep=13pt] at (6){}; 
        \draw[edge] (1) to (2); 
        \draw[edge] (2) to (3); 
        \draw[edge] (3) to (4); 
        \draw[edge] (4) to (5); 
        \draw[edge] (5) to (6); 
        \draw[dotted] (5,-0.5) to (5,1.75);
        \node at (2,1) {6};
        \node at (8,1) {5};
    \end{tikzpicture}
    \begin{tikzpicture}
        \tikzset{vertex/.style = {shape=circle,draw,minimum size=2.5em}}
         
        \tikzset{edge/.style = {-}}
        \node[vertex,draw=red](1) at  (0,0){1,{\color{red}1}};
        \node[vertex](2) at (2,0){2};
        \node[vertex,draw=red](3) at (4,0){3,{\color{red}1}};
        \node[vertex,draw=red](4) at (6,0){4,\color{red}{1}};
        \node[vertex](5) at (8,0){5};
        \node [vertex, draw =blue, inner sep=12pt] at (5){}; 
        \node[vertex,draw=red](6) at (10,0){6,{\color{red}2}};
        \node [vertex, draw =blue, inner sep=13pt] at (6){}; 
        \draw[edge] (1) to (2); 
        \draw[edge] (2) to (3); 
        \draw[edge] (3) to (4); 
        \draw[edge] (4) to (5); 
        \draw[edge] (5) to (6); 
        \draw[dotted] (7,-0.5) to (7,1.75);
        \node at (3,1) {7};
        \node at (9,1) {2};
    \end{tikzpicture}
    \caption{An example of the partition algorithm deciding where to split the graph shown on a path $P_6$ with tasks in red and starting vertices of robots being circled in blue. The time span of the schedule $C_1$ is shown above each subgraph.\looseness=-1}
    \label{fig:2partition_alg_example}
\end{figure}
 \begin{lemma}
     \label{lem:contiguous-task-assignment}
     Given an instance of $2$-\rsp on an $n$-length path $P$ with a set of equal-length tasks $T = (t_1, t_2, \dots, t_m)$, and starting vertices $sv_L = v_{i_L}, sv_R = v_{i_R}$, for any schedule $\mathcal{C} = (C_{\ell}, C_r)$ where the rightmost task $t_{L,R}$ assigned to the left robot is right of the leftmost $t_{R,L}$ assigned to the right robot, there exists some schedule $\mathcal{C}' = (C_{\ell}', C_r')$ that takes no more time than $\mathcal{C}$ and does not contain any such tasks.\looseness=-1
 \end{lemma}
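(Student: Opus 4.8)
The plan is to prove this by an \emph{exchange argument} on the task assignment. First I would put the tasks in left-to-right order $t_1,\dots,t_m$ as in the rest of the section, so that a schedule's assignment is just a choice, for each $t_j$, of $R_L$ or $R_R$; the hypothesis that the rightmost task of $R_L$ is right of the leftmost task of $R_R$ says exactly that this choice is not \emph{monotone}, i.e.\ there is an \emph{inversion}: a pair $j<j'$ with $t_j$ given to $R_R$ and $t_{j'}$ given to $R_L$. The target $\mathcal{C}'$ is a schedule whose assignment is monotone (so $R_L$ gets a prefix $t_1,\dots,t_q$ and $R_R$ the suffix $t_{q+1},\dots,t_m$), which by definition contains no such tasks. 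I would reach it by repeatedly applying one local move to $\mathcal{C}$: letting $t_{L,R}$ (on $v_a$) be the rightmost task currently assigned to $R_L$ and $t_{R,L}$ (on $v_b$, with $b<a$) the leftmost currently assigned to $R_R$, swap their assignments, giving $t_{L,R}$ to $R_R$ and $t_{R,L}$ to $R_L$, and then repair the walks.

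The structural fact that makes the move work is that, on a path, collision-freeness forces the two robots to keep their left--right order for the whole schedule: if $R_L$ starts left of $R_R$ then $R_L$ is strictly left of $R_R$ at every timestep, since exchanging order would require them to share a vertex or traverse a common edge. Hence at the timestep when $R_R$ sits on $v_b$ the robot $R_L$ is strictly left of $v_b$, and at the timestep when $R_L$ sits on $v_a$ the robot $R_R$ is strictly right of $v_a$. As $b<a$, it follows that in $\mathcal{C}$ the set of vertices ever visited by $R_L$ contains $[v_{b-1},v_a]$ and that visited by $R_R$ contains $[v_b,v_{a+1}]$; in particular, after the swap, the range of vertices spanned by the tasks assigned to $R_L$ (resp.\ $R_R$) is \emph{strictly} contained in the range of vertices $R_L$ (resp.\ $R_R$) actually visits in $\mathcal{C}$, and that range also contains the robot's start vertex.

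From here the time-span bound follows from Corollary~\ref{col:computing-1-rsp-time}: the optimal single-robot schedule for $R_L$ on its new task set has time span $\min(|i_L-\ell'|,|i_L-r'|)+(r'-\ell')+\sum_{t}|t|$, where $[\ell',r']$ is the new task range; since $[\ell',r']$ is contained in the interval of vertices $R_L$ visits in $\mathcal{C}$ (which contains $i_L$ and contains $v_a$ strictly to the right of $r'$), a short monotonicity inequality for $(x,y)\mapsto (y-x)+\min(|p-x|,|p-y|)$ on nested intervals containing $p$ shows this value is strictly below $|C_\ell|$, and symmetrically for $R_R$. So each robot's new task set is completable individually within its own old time span, with room to spare. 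Moreover the swap strictly decreases the number of inversions: recolouring $t_{L,R}$ from $R_L$ to $R_R$ creates no new inversion (no task right of $v_a$ is assigned to $R_L$), recolouring $t_{R,L}$ from $R_R$ to $R_L$ creates no new inversion (no task left of $v_b$ is assigned to $R_R$), and the inversion between $t_{R,L}$ and $t_{L,R}$ is destroyed. Thus after at most $\binom{m}{2}$ moves the assignment is monotone, and the time span has only decreased.

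The step I expect to be the real obstacle is producing, at each move, an \emph{actual collision-free} schedule for the swapped assignment whose time span does not exceed $\max(|C_\ell|,|C_r|)$: the two robots' task sets can be completed individually within the bound, but running both robots along their individually-optimal tours might cause collisions, and re-synchronising them is not automatic. I would handle this by surgery on the walks of $\mathcal{C}$ rather than re-optimising from scratch: each robot keeps its original behaviour on the contested segment $[v_b,v_a]$ (so their relative order there is inherited from $\mathcal{C}$, hence still collision-free) and only deletes the now-unneeded excursion out to its reassigned extreme task ($R_L$ no longer needs to reach $v_a$, $R_R$ no longer needs to reach $v_b$), inserting a $|t|$-length pause where the reassigned task is now performed; the slack freed by deleting the excursion is then spent re-phasing one robot so that order-preservation, and hence collision-freeness, holds throughout. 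Making this surgery precise --- in particular verifying that the deleted excursion is always long enough to pay for the re-phasing, which is exactly where the equal task lengths are used --- is the crux I would expect to spend the most effort on.
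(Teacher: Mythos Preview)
Your high-level plan---an exchange argument that repeatedly swaps the rightmost $R_L$-task with the leftmost $R_R$-task until the assignment is monotone, together with the order-preservation observation on a path---matches the paper exactly, and your inversion-count argument for termination is correct.

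Where you diverge is precisely at the step you flag as ``the real obstacle'': repairing the schedule after a swap. You propose to delete each robot's now-unneeded excursion to its old extreme task, insert the new task elsewhere, and then re-phase using the freed slack; you correctly note that verifying the slack is always sufficient is where the work would lie. The paper sidesteps all of this with a much simpler repair: \emph{do not delete any movement}. Since $R_L$ must pass through $v_{R,L}$ on its way to $v_{L,R}$, simply have $R_L$ perform $t_{R,L}$ on its last visit to $v_{R,L}$ before reaching $v_{L,R}$, and drop the performance of $t_{L,R}$ at $v_{L,R}$ (but still walk there). Because the two tasks have equal duration, the total length of $C_L$ is unchanged; between the insertion point and the deletion point $R_L$ is exactly $|t|$ timesteps behind its old self, hence at every timestep weakly to the left of its old position, hence still strictly left of $R_R$. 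The symmetric modification to $C_R$ pushes $R_R$ weakly to the right. Collision-freeness is therefore inherited directly from $\mathcal{C}$, with no re-phasing and no appeal to Corollary~\ref{col:computing-1-rsp-time}.

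So your detour through individually-optimal single-robot schedules, and the surgery with excursion-deletion, are unnecessary; the equal-length hypothesis is used not to bound slack for re-phasing but simply to make ``insert one task, remove one task'' time-neutral. Note also that the paper's repair yields a schedule of \emph{equal} time span, not strictly smaller---your claim of strict inequality is stronger than what is needed (the lemma only asserts ``no more time'') and stronger than what the argument actually delivers.
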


 \begin{proof}
     Let $\mathcal{C} = (C_{L}, C_R)$ be a schedule where the rightmost task $t_{L,R}$ (\textbf{L}eft robot's \textbf{R}ightmost task) assigned to the left robot $R_L$ is right of the leftmost task $t_{R, L}$ (\textbf{R}ight robot's \textbf{L}eftmost task) assigned to the right robot $R_R$. Let $t_{L,R}$ be located on $v_{L, R}$ and $t_{R, L}$ be located on $v_{R,L}$. Note that the left robot $R_{L}$ must visit the vertex $v_{R, L}$ containing task $t_{R, L}$, and the right robot $R_{R}$ must visit the vertex $v_{L, R}$ containing $t_{L, R}$. Observe now that $R_{R}$ must be right of $R_{L}$ during the execution of task $t_{L,R}$ by $R_L$. Therefore, if $R_{L}$ completes task $t_{R,L}$ on the last visit to $v_{R,L}$ in the schedule $C_{L}$ before reaching $v_{L, R}$, there can be no conflict with $R_R$. By the same argument, $R_R$ can complete task $t_{L,R}$ on the last visit to $v_{L, R}$ in the schedule $C_R$ before reaching $v_{R,L}$. Hence by assigning $t_{R, L}$ to $R_{L}$ and $t_{L, R}$ to $R_R$ in this manner, there will be no conflict, and further, $R_{L}$ will reach $v_{L, R}$ in the same timestep as $t$ is completed in $C_{L}$, and then immediately leave.\looseness=-1

     Repeating these arguments, we can generate a new schedule $\mathcal{C}'$ taking the same number of timesteps as $\mathcal{C}$ and satisfying the condition that the rightmost task completed by $R_{L}$ is left of the leftmost task completed by $R_{R}$. Note that $R_{L}$ may start right of some task completed by $R_R$ (equivalently, $R_R$ may start left of some task completed by $R_{L}$, though not both) in $\mathcal{C}'$. Thus, it can not be assumed that a faster schedule is formed by removing the walk between the rightmost task assigned to $R_L$ by $\mathcal{C}$, and the rightmost task assigned to it by $\mathcal{C}'$. Hence, we can only claim that the time span $\mathcal{C}'$ is no greater than the time span of $\mathcal{C}$.\looseness=-1
 
\end{proof}

 \begin{lemma}
     \label{lem:2-partition-equal-length-tasks}
     Given an instance of $2$-\rsp on an $n$-length path $P$ with a set of tasks $T = (t_1, t_2, \dots, t_m)$ where the length of $t_i$ is equal to the length of $t_j$ for every $i, j \in [1, m]$. Further, let $sv_{L}$ and $sv_R$ be the starting vertices of the robots. Then $C_2(P, T, (sv_{L}, sv_R))$ is a fastest set of schedules and can be found in $O(m)$ time.\looseness=-1
 \end{lemma}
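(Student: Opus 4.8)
The plan is to prove the two inequalities that pin the optimum down. For notational ease, for each admissible split point $q$ (including the degenerate ones where one robot gets no task) write $A_q$ for the time span of the optimal single-robot schedule for $R_L$ on the subpath $P_{1, \max(i_L, i_q)}$ completing the prefix $(t_1, \dots, t_q)$ from $sv_L$, and $B_q$ for the time span of the optimal single-robot schedule for $R_R$ on $P_{\min(i_{q+1}, i_R), m}$ completing the suffix $(t_{q+1}, \dots, t_m)$ from $sv_R$; by Corollary \ref{col:computing-1-rsp-time} both are given by the closed form stated there. The algorithm picks the $q$ minimising $\max(A_q, B_q)$ and runs the two corresponding schedules concurrently, so the proof has three parts: (i) show this concurrent run is task-completing and collision-free, hence has time span exactly $\min_q \max(A_q, B_q)$; (ii) show every task-completing collision-free schedule has time span at least $\min_q \max(A_q, B_q)$; (iii) bound the running time by $O(m)$.

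For (ii) I would invoke Lemma \ref{lem:contiguous-task-assignment}: among all fastest task-completing collision-free schedules there is one, $\mathcal{C}^\star = (C_L^\star, C_R^\star)$, in which the rightmost task completed by $R_L$ is left of the leftmost task completed by $R_R$; since every task is completed exactly once, this forces $R_L$ to complete exactly $(t_1, \dots, t_{q^\star})$ and $R_R$ to complete exactly $(t_{q^\star+1}, \dots, t_m)$ for some split $q^\star$. Since $R_L$ must complete all of $(t_1, \dots, t_{q^\star})$, Lemma \ref{lem:1-rsp-opt} gives that $|C_L^\star|$ is at least the fastest single-robot time for that task set from $sv_L$, which equals $A_{q^\star}$ because such an optimal schedule never uses vertices outside the interval spanned by $sv_L$ and the extreme task vertices (hence outside $P_{1, \max(i_L, i_{q^\star})}$); symmetrically $|C_R^\star| \ge B_{q^\star}$. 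Thus the time span of $\mathcal{C}^\star$, and so of every fastest schedule, is at least $\max(A_{q^\star}, B_{q^\star}) \ge \min_q \max(A_q, B_q)$. This direction is where the equal-duration hypothesis is used, through Lemma \ref{lem:contiguous-task-assignment}.

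For (i), task-completeness is immediate, so the work is collision-freeness of the concurrent run for the split $q$ chosen by the algorithm. Let $J_L \subseteq [\min(i_1, i_L), \max(i_q, i_L)]$ and $J_R \subseteq [\min(i_{q+1}, i_R), \max(i_m, i_R)]$ be the vertex intervals the two robots actually visit; if $J_L \cap J_R = \emptyset$ there is nothing to prove. Otherwise I would first observe that an overlap can occur only ``on one side'': since $i_q < i_{q+1}$, if $R_L$ visits a vertex at or right of $sv_R$ then $i_{q+1} > i_R$, so $R_R$ never goes left of $sv_R$, and symmetrically; hence, using the left--right symmetry of the construction, we may assume $R_L$ has a task at some $v_j$ with $j \ge i_R$ while $R_R$ never goes left of $sv_R$. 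Then all of $R_R$'s tasks lie strictly right of the overlap region $[i_R, i_q]$, so $R_R$ leaves $sv_R$ immediately, moves monotonically right, and never pauses inside $[i_R, i_q]$; meanwhile $R_L$ starts strictly left of $sv_R$, so (splitting into the sub-cases according to where $sv_L$ lies relative to $R_L$'s tasks, which determines whether $R_L$'s optimal schedule sweeps right-first or left-first, recalling that right-bound and left-bound stretches of that schedule carry no task pauses) $R_L$ reaches any $v_j \in [i_R, i_q]$ only strictly after $R_R$ has passed it, and since $R_R$ never pauses there while $R_L$ only ever pauses, the gap between them cannot close. This gives a collision-free concurrent run of time span $\max(A_q, B_q)$. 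The corner cases where one robot is idle are patched by letting the idle robot make a short detour off its vertex while the active robot sweeps through it; as the active robot is then the bottleneck, this does not change the time span.

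Finally, for (iii): precompute the prefix sums $\sum_{j \le q} |t_j|$ in $O(m)$ time (the tasks may be assumed sorted by position, as in Lemma \ref{lem:1-rsp-opt}); then by Corollary \ref{col:computing-1-rsp-time} each $A_q$ and $B_q$ is evaluated in $O(1)$ time, so scanning $q$ from $0$ to $m$, tracking the minimiser, and emitting the two single-robot schedules takes $O(m)$ time overall. The main obstacle is step (i): correctly establishing the ``overlap on one side'' dichotomy and then verifying, across the sub-cases for $R_L$'s sweep direction, that the trailing robot can never catch the leading one; everything else reduces to bookkeeping with the closed form of Corollary \ref{col:computing-1-rsp-time} and to applying Lemmas \ref{lem:contiguous-task-assignment} and \ref{lem:1-rsp-opt}.
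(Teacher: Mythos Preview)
Your approach matches the paper's: both directions hinge on Lemma \ref{lem:contiguous-task-assignment} to reduce an optimal schedule to a contiguous split, and then compare each side against the corresponding single-robot optimum via Lemma \ref{lem:1-rsp-opt} and Corollary \ref{col:computing-1-rsp-time}. Your feasibility argument (i) is considerably more explicit than the paper's one-sentence treatment of collision-freeness; your ``overlap on one side'' dichotomy together with the speed bound $\text{pos}(R_L,t)\le i_L+t<i_R+t$ in the overlap region is precisely what is needed to make that sentence rigorous, and the $O(m)$ bookkeeping in (iii) via prefix sums and the closed form is equivalent to the paper's incremental-update argument.

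The one misstep is your patch for the degenerate splits $q\in\{0,m\}$. On a path two robots cannot pass one another, so an idle $R_R$ cannot simply ``detour off its vertex'' and let $R_L$ sweep through and continue beyond; $R_L$ can never get to the right of $R_R$. Fortunately the patch is unnecessary rather than wrong: if $q=m$ were the minimiser and $R_L$'s interval reached $i_R$ (i.e.\ $i_m\ge i_R$), then $B_{m-1}=(i_m-i_R)+d<(i_m-i_L)+d\le A_m$ and $A_{m-1}\le A_m-d<A_m$, so $\max(A_{m-1},B_{m-1})<A_m$ and $q=m-1$ would strictly beat $q=m$, a contradiction. Thus whenever a degenerate split is actually chosen, the active robot's interval stops strictly short of the idle robot's start vertex and no collision arises. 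Replace the detour remark with this observation and your argument is complete.
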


 \begin{proof}
     Following Lemma \ref{lem:contiguous-task-assignment}, we have that there exists some schedule $\mathcal{C}$ where every task assigned to $R_{L}$ is left of every task assigned to $R_R$ and such that no schedule completes all tasks faster than $\mathcal{C}$. Further, if $R_{L}$ starts right of every task completed in $C_{L}$, then there exists some such $\mathcal{C}$ in which $R_{L}$ starts by moving to the first task completed in $C_{L}$ (equivalently, if $R_R$ starts left of every task completed in $C_{R}$, then there exists some such $\mathcal{C}$ in which $R_R$ starts by moving to the first task completed in $C_R$). Now, note that the fastest schedule solving the given $2$-\rsp instance contains a solution to the $1$-\rsp instances corresponding to $P_{1, \max(\ell, i_q)}, T_{L} = (t_1, t_2, \dots, t_q)$ with the robot starting on $sv_{L}$, and $P_{\min(r, i_{q + 1}, n)}, T_{L} = (t_{q + 1}, t_{q + 2}, \dots, t_m)$ with the robot starting on $sv_{R}$, where $q$ is the number of tasks completed by $R_{L}$. Now, assume that $C_{L}$ is not the fastest schedule satisfying the first $1$-\rsp instance.\looseness=-1
   
     Recalling that the solution given by $C_1(P_{1, \max(sv_L, i_q)}, T_{L}, sv_{L})$ will move $R_{L}$ left if $v_{L}$ is right of the first task in $T_{L}$, \newline $C_{L}$ can be replaced with $C_1(P_{1, \max(sv_L, i_q)}, T_{L}, v_{L})$ without adding any collisions while taking no more time than $C_{L}.$ Following the same arguments for $C_{R}$, we get that the optimal solution to the $2$-\rsp instance must be of the form $$(C_1(P_{1, \max(sv_L, i_q)}, T_{L}, sv_{L}),C_1(P_{1, \min(sv_R, i_{q + 1})}, T_{R}, sv_{R}))$$, for some $q \in [1, m]$. Hence, by checking each value of $q$ and selecting the fastest such schedule, we determine the fastest schedule.\looseness=-1
   
     To achieve the time complexity result, if we assume that the first partition assigns all tasks to the $R_R$, requiring $t_R$ timesteps to complete, then proceeds by removing the leftmost task from the schedule, the time required to complete the second task assigned to $R_R$ can be found in constant time with the equation $t_{R} - \min(\vert r - i_1 \vert, \vert r - i_m \vert ) - i_1 - \vert t_1 \vert + \min(\vert r - i_2 \vert, \vert r - i_m \vert) + i_2$, where $r$ is the index of the vertex $v_r$ where $R_R$ starts. Therefore, after an initial cost of $O(m)$ to compute $T_T$, the time required to complete the schedule assigned to the right robot requires $O(1)$ time at each step. The same arguments may be applied to the time required to compute the schedule assigned to the left robot. As $O(m)$ steps are needed, the time complexity of this method is $O(m)$, and hence the statement holds.\looseness=-1
 
\end{proof}

 Despite being optimal for tasks of equal duration, the partition algorithm does not always return an optimal scheduling. See Figure \ref{fig:alg-counter-example} for an example where this algorithm fails.

   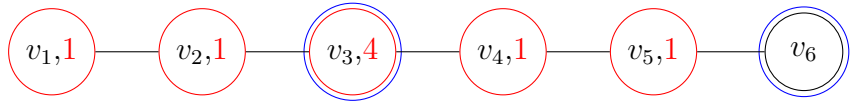
\begin{figure}
    \centering
     \begin{tikzpicture}[scale=1]
        \tikzset{vertex/.style = {shape=circle,draw,minimum size=2.5em}}
        \tikzset{edge/.style = {-}}
        \node[vertex,draw=red](1) at (0,0){$v_1$,{\color{red}1}};
        \node[vertex,draw=red](2) at (2,0){$v_2$,{\color{red}1}};
        \node[vertex,draw=red](3) at (4,0){$v_3$,{\color{red}4}};
        \node[vertex,draw=red](4) at (6,0){$v_4$,\color{red}{1}};
        \node[vertex,draw=red](5) at (8,0){$v_5$,{\color{red}1}};
        \node [vertex, draw =blue, inner sep=13pt] at (3){}; 
        \node[vertex](6) at (10,0){$v_6$};
        \node [vertex, draw =blue, inner sep=12pt] at (6){}; 
        \draw[edge] (1) to (2); 
        \draw[edge] (2) to (3); 
        \draw[edge] (3) to (4); 
        \draw[edge] (4) to (5); 
        \draw[edge] (5) to (6); 
    \end{tikzpicture}
    \caption{Counter-Example to the optimality of the partition algorithm.
    The optimal solution would be the left robot completing the schedule $((v_3,4),[(v_3,v_2),(v_2,v_1)],(v_1,1))$ and the other robot doing $([(v_6,v_5)],(v_5,1),[(v_5,v_4)],(v_4,1),[(v_4,v_3),(v_3,v_2)],(v_2,1))$ with a total time span of 7. Whereas the partition algorithm would return the schedules 
    $((v_3,4),[(v_3,v_2)],(v_2,1),[(v_2,v_1)],(v_1,1))$ and $([(v_6,v_5)],(v_5,1),[(v_5,v_4)],(v_4,1))$ with a total time span of 8.
    }
    \label{fig:alg-counter-example}
\end{figure}

 %

 \begin{theorem}
     \label{thm:2-partition-any-length-tasks}
     Given an instance of $2$-\rsp on an $n$-length path $P = (V, E)$ with a set of tasks $T = (t_1, t_2, \dots, t_m)$ and starting vertices $sv_{L}$ and $sv_R$. Then $C_2(P, T, (sv_{L}, sv_R))$ has time-span at most a factor of 2 greater than the time-span of the fastest set of schedules solving this instance.\looseness=-1
 \end{theorem}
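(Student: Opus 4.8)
The plan is to fix a fastest task-completing, collision-free schedule $\mathcal{C}^{\ast}=(C_L^{\ast},C_R^{\ast})$ of time span $T^{\ast}$, and to exhibit, for every instance, one split point $q$ for which both single-robot schedules produced by the partition algorithm have time span at most $2T^{\ast}$; since $C_2(P,T,(sv_L,sv_R))$ is obtained from the split minimising the larger of the two, this gives the theorem. Write $A_L,A_R$ for the task sets that $\mathcal{C}^{\ast}$ assigns to $R_L,R_R$, and let $v_{i_1},v_{i_m}$ be the leftmost and rightmost task vertices. By Corollary~\ref{col:computing-1-rsp-time} the optimal single-robot time span on a path is a \emph{travel} term (the minimum number of moves needed to visit all task vertices from the start) plus the total task duration; I will use that this optimum is monotone under adding tasks and under enlarging the subpath (and unchanged once the subpath contains the start and all task vertices). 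The basic lower bounds are $T^{\ast}\ge\vert C_L^{\ast}\vert\ge(\text{travel of }R_L\text{ in }\mathcal{C}^{\ast})+\sum_{t\in A_L}\vert t\vert$, and symmetrically for $R_R$, together with $T^{\ast}\ge\sum_{t\in A_L}\vert t\vert$, $T^{\ast}\ge\sum_{t\in A_R}\vert t\vert$.

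The first ingredient is a \emph{no-overtaking} lemma: on a path, if $sv_L$ is strictly left of $sv_R$ then in any collision-free schedule $R_L$ is strictly left of $R_R$ at every timestep; this follows by a one-step induction using the vertex-collision clause (positions never coincide) and the edge-collision clause (the robots cannot cross an edge in opposite directions). In particular, if $R_R$ performs $t_1$ then $R_L$ is, at some timestep, strictly to the left of $v_{i_1}$, and symmetrically for $t_m$. I then split on whether $t_1\in A_L$ and whether $t_m\in A_R$. In the three cases where one robot, say $R_L$, is assigned both $t_1$ and $t_m$, or is assigned $t_m$ while $R_R$ is assigned $t_1$ (so no-overtaking forces $R_L$ strictly left of $v_{i_1}$), the robot $R_L$ already sweeps the whole span $[v_{i_1},v_{i_m}]$ in $\mathcal{C}^{\ast}$, so the degenerate split handing all tasks to $R_L$ has cost $(\text{minimal sweep from }sv_L)+\sum_{t}\vert t\vert\le(\text{travel of }R_L\text{ in }\mathcal{C}^{\ast})+\sum_{t}\vert t\vert=\vert C_L^{\ast}\vert+\sum_{t\in A_R}\vert t\vert\le 2T^{\ast}$ (the symmetric split covers the case where $R_R$ gets both $t_1,t_m$). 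In the remaining case $t_1\in A_L$, $t_m\in A_R$, let $v_a$ be the rightmost vertex visited by $R_L$ in $\mathcal{C}^{\ast}$ and take the split $T_L=\{t\mid v_{i_t}\le v_a\}$, $T_R=\{t\mid v_{i_t}>v_a\}$. Then $T_R\subseteq A_R$, so by monotonicity the $R_R$-side cost is at most the single-robot optimum for $R_R$ on $A_R$, hence $\le\vert C_R^{\ast}\vert\le T^{\ast}$; and since $t_1\in T_L$, the interval $R_L$ must sweep for $T_L$ is contained in the one $R_L$ sweeps in $\mathcal{C}^{\ast}$, so the $R_L$-side cost is at most $\vert C_L^{\ast}\vert-\sum_{t\in A_L}\vert t\vert+\sum_{t\in T_L}\vert t\vert=\vert C_L^{\ast}\vert+\sum_{t\in T_L\setminus A_L}\vert t\vert\le\vert C_L^{\ast}\vert+\sum_{t\in A_R}\vert t\vert\le 2T^{\ast}$.

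The main obstacle, and the reason a case analysis is needed rather than a single uniform choice of $q$, is controlling the extra work a partitioned robot is forced to do on tasks that $\mathcal{C}^{\ast}$ assigned to the other robot: a direct ``balance the prefix/suffix durations'' split can leave an imbalance of up to $\max_{t}\vert t\vert$ and only yields a factor $3$, while a split read purely off $v_a$ (or the symmetric $v_b$) can fail when the leftmost or rightmost task is served by the ``wrong'' robot — this is also exactly where the equal-length contiguity statement (Lemma~\ref{lem:contiguous-task-assignment}) stops applying. The case split guarantees instead that the chosen boundary either creates no extra travel at all (only extra task duration, absorbed by $\sum_{t\in A_R}\vert t\vert\le T^{\ast}$), or that one robot already traverses the whole task span in the optimum, making the all-to-one-robot schedule competitive. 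The remaining points are routine bookkeeping: that the algorithm's search over split points includes the two degenerate splits with one robot idle; that ``minimal sweep from $sv$'' is precisely the travel term of Corollary~\ref{col:computing-1-rsp-time}; that waiting moves in $\mathcal{C}^{\ast}$ only strengthen the inequalities; and that restricting to $P_{1,\max(i_L,i_q)}$ and $P_{\min(i_{q+1},i_R),\cdot}$ changes none of the single-robot optima used above, since each such subpath still contains the relevant start vertex and task vertices.
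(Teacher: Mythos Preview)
Your proof is correct, and the core mechanism---the no-overtaking lemma together with a split read off the optimal schedule, bounding the added task durations by $\sum_{t\in A_R}\lvert t\rvert\le T^\ast$---is the same as the paper's. The difference is that the paper avoids the case analysis entirely. It simply takes $q$ to be the index of the \emph{rightmost task assigned to $R_L$} in the optimum (or the degenerate split if $A_L=\emptyset$) and argues uniformly: since $R_L$ completes $t_q$ and, by no-overtaking, is strictly left of $v_{i_1}$ whenever $R_R$ is at $v_{i_1}$, the walk of $R_L$ in $\mathcal{C}^\ast$ visits every vertex $v_{i_1},\dots,v_{i_q}$ \emph{regardless} of whether $t_1\in A_L$ or $t_1\in A_R$. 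This single split then gives $\lvert D_L\rvert\le\lvert C_L^\ast\rvert+\sum_{t\in A_R}\lvert t\rvert\le 2T^\ast$ and, because $\{t_{q+1},\dots,t_m\}\subseteq A_R$, even $\lvert D_R\rvert\le\lvert C_R^\ast\rvert\le T^\ast$. Your worry that ``a split read purely off $v_a$ can fail when the leftmost or rightmost task is served by the wrong robot'' does not materialise: the very no-overtaking lemma you state guarantees $R_L$ still sweeps the needed interval in those cases. So your four-case split is not wrong, just unnecessary; what your version buys is a more explicit statement of the no-overtaking lemma (the paper invokes it only implicitly), while the paper's version buys brevity.
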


 \begin{proof}
     Let $a$ be the number of timesteps required by the fastest task-completing collision-free schedule to solve this instance, $\mathcal{C} = (C_L, C_R)$. Now, let $t_{R,L}$ be the rightmost task assigned to the left robot. We construct the new schedule $\mathcal{D} = (D_L, D_R)$ by setting $D_L = C_1(P, (t_1, t_2, \dots, t_{R,L}), sv_L)$, and $D_R = C_1(P, T \setminus (t_{L,R +1}, \dots, t_{m}), sv_R)$.
     Note that $R_L$ must visit every vertex containing any task in $(t_1, t_2, \dots, t_{R,L})$ as in $C_L$, $R_L$ must either complete $t_1$ or be left of $t_1$ when it is completed and completes $t_{R,L}$. Therefore, in time $a$, it is possible to complete all tasks in $(t_1, t_2, \dots, t_{R, L})$, and for $R_L$ to visit every vertex in $(t_1, t_2, \dots, t_{R,L})$ at least once. As such, in $2 a$ time, it must be possible for $R_L$ to complete all the tasks in $(t_1, t_2, \dots, t_{R,L})$ and visit each such vertex. As $C_1(P, (t_1, t_2, \dots, t_{R,L}), sv_L)$ returns the fastest schedule in which $R_L$ completes every task in $(t_1, t_2, \dots, t_{R,L})$, $\vert C_1(P, (t_1, t_2, \dots, t_{R,L}), sv_L) \leq 2 a$. Applying the same arguments to the right robot gives $C_1(P, T \setminus (t_1, t_2, \dots, t_{R,L}), sv_R) \leq 2 a$, and thus $C_2(P, T, (sv_L, sv_R))$ has a time-span of at most $2 a$, completing the proof.\looseness=-1
 
\end{proof}
\subsection{$k$-Robot Scheduling on Paths}
\label{subsec:k-rsp-alg}

Now, we generalise this algorithm to the $k$ robot case. To do so, we build a dynamic programming algorithm based on the same principles as the previous partition algorithm. As in the previous sections, we first provide an overview of the algorithm, then the main results. In Theorem \ref{thm:k-robots-partition-equal-length-task}, we show that this algorithm is optimal when all tasks are of equal duration. Finally, in Theorem \ref{thm:k-robot-k-approximation}, we show that this algorithm produces a schedule that takes time at most a factor of $k$ more than the fastest schedule for a given $k$-\rsp instance.\looseness=-1

\paragraph*{The $k$-Partition Algorithm.}

Let $P$ be a path of length $n$, $T = \{t_1, t_2, \dots, t_m\}$ be a set of tasks, and let $sv_1, sv_2, \dots, sv_k$ be the starting vertices of the robots $R_1, R_2, \dots, R_k$ respectively, with the assumption that $R_{i}$ starts left of $R_{i + 1}$, for every $i \in [1, k - 1]$. Further, we denote by $i_{t}$ the index such that $v_{i_t}$ contains task $t$, and assume that $i_{t_j} < i_{t_{j + 1}}$ (i.e. task $t_j$ is left of $t_{j + 1}$) for every $j \in [1, m - 1]$.  We construct a $k \times m$ table $S$, with $S[c, \ell]$ containing the time required to complete the fastest collision-free schedule completing tasks $t_1, t_2, \dots, t_{\ell}$ with robots $R_1, R_2, \dots, R_{c}$. Let $C_1(P, T, sv)$ return the optimal schedule for a single robot starting at $sv$ on the path $P$ for completing the task set $T$, for ease of notation the starting vertex of the robot is often omitted as a parameter. For the purposes of the partition algorithm we also define $C_1(P,\emptyset,sv) := 0$. \looseness=-1

First, observe that $S[1, \ell]$ can be computed, for every $\ell \in [1, m]$, in $O(m)$ time.
Now, assuming the value of $S[c - 1, \ell]$ has been computed for every $\ell \in [1, m]$, the value of $S[c, \ell]$ is computed by finding the value $r$ minimising the larger of $\vert C_1(P, (t_{r + 1}, t_{r + 2}, \dots, t_\ell)) \vert$ and $ S[c-1, r]$, formally

$$S[c, \ell] = \min_{r \in [1, \ell]} \max(\vert C_1(P, (t_{r + 1}, t_{r + 2}, \dots, t_\ell)) \vert , S[c-1, r]).$$
Letting $\mathcal{S}$ be an auxiliary table such that $\mathcal{S}[c, \ell]$ contains the schedule corresponding to the time given in $\mathcal{S}[c, \ell]$. A task-completing collision-free schedule for the $k$-\rsp instance is given in $\mathcal{S}[k, m]$.
For the remainder of this section, let $S_k(P, T, (sv_1, sv_2, \dots, sv_k))$ return the schedule determined by this table. Note that for $S_1(P, T, (sv_1))$, this becomes equivalent to $C_1 (P, T)$
\looseness=-1
\begin{example}
To illustrate this we shall now run the partition algorithm for $k=3$ robots on the instance in Figure \ref{fig:k-partition-example}. In order to arrive at the table $S[c, \ell]$ shown in Table \ref{tab:dynamic-programming} we need to do the following calculations (the intermediate steps of 1 and 2 robots are skipped).\\
$S[3,1] =  \max(S[2,1]=2,0)  =2$ \\
$S[3,2] = \min ( \max(S[2,1]=2,5), \max(S[2,2]=2,0) ) =2$ \\
$S[3,3] = \min ( \max(S[2,1],6), \max(S[2,2],4), \max(S[2,3]=3,0) ) =3$\\
$S[3,4] = \min ( \max(S[2,1],8), \max(S[2,2],6), \max(S[2,3],4),$\\  $\max(S[2,4]=4,0) ) =4$\\
$S[3,5] = \min ( \max(S[2,1],9), \max(S[2,2],7), \max(S[2,3],5),\max(S[2,4],2),$\\ $\max(S[2,5]=6, 0 ) ) =4$\\
$S[3,6] = \min ( \max(S[2,1],10), \max(S[2,2],8), \max(S[2,3],6), \max(S[2,4],3),$\\ $ \max(S[2,5], 1 ), \max(S[2,6]=7,0)  ) =4$\\
\end{example}
\begin{figure}[h]
\centering
    \begin{tikzpicture}
        \tikzset{vertex/.style = {shape=circle,draw,minimum size=2.5em}}
        \tikzset{edge/.style = {-}}
        \node[vertex,draw=red](1) at (0,0){$v_1$,{\color{red}2}};
        \node[vertex,draw=red](2) at (2,0){$v_2$,{\color{red}1}};
        \node[vertex,draw=red](3) at (4,0){$v_3$,{\color{red}1}};
        \node[vertex,draw=red](4) at (6,0){$v_4$,\color{red}{2}};
        \node[vertex,draw=red](5) at (8,0){$v_5$,{\color{red}1}};
        \node[vertex,draw=red](6) at (10,0){$v_6$,{\color{red}1}};
        \node [vertex, draw =blue, inner sep=13pt] at (1){}; 
        \node [vertex, draw =blue, inner sep=13pt] at (3){}; 
        \node [vertex, draw =blue, inner sep=13pt] at (6){}; 
        \draw[edge] (1) to (2); 
        \draw[edge] (2) to (3); 
        \draw[edge] (3) to (4); 
        \draw[edge] (4) to (5); 
        \draw[edge] (5) to (6); 
    \end{tikzpicture}
    \caption{An instance of the 3-\rsp problem.}
    \label{fig:k-partition-example}
    \end{figure}
\begin{table}[h]
    \centering
   \begin{tabular}{|c|c|c|c|c|c|c|}
             \hline
             {\diagbox{c}{$\ell$}} &1 & 2 & 3 & 4 &5 &6  \\
             \hline
             1 & 2 & 4 & 6 & 9 & 11 & 13\\
             2 & 2 & 2 & 3 & 4 & 6 & 7 \\
             3 & 2 & 2 & 3 & 4 & 4 & 4\\
             \hline
    \end{tabular}
    \caption{The dynamic programming table $s[c,\ell]$ for the instance in Figure \ref{fig:k-partition-example}.}
    \label{tab:dynamic-programming}
\end{table}
    
\begin{figure}[h]
\centering
    \begin{tikzpicture}
        \tikzset{vertex/.style = {shape=circle,draw,minimum size=2.5em}}
        \tikzset{edge/.style = {-}}
        \node[vertex,draw=red](1) at (0,0){$v_1$,{\color{red}2}};
        \node[vertex,draw=red](2) at (2,0){$v_2$,{\color{red}1}};
        \node[vertex,draw=red](3) at (4,0){$v_3$,{\color{red}1}};
        \node[vertex,draw=red](4) at (6,0){$v_4$,\color{red}{2}};
        \node[vertex,draw=red](5) at (8,0){$v_5$,{\color{red}1}};
        \node[vertex,draw=red](6) at (10,0){$v_6$,{\color{red}1}};
        \node [vertex, draw =blue, inner sep=13pt] at (1){}; 
        \node [vertex, draw =blue, inner sep=13pt] at (3){}; 
        \node [vertex, draw =blue, inner sep=13pt] at (6){}; 
        \draw[edge] (1) to (2); 
        \draw[edge] (2) to (3); 
        \draw[edge] (3) to (4); 
        \draw[edge] (4) to (5); 
        \draw[edge] (5) to (6); 
        \draw (3,-0.5) to (3,1);
        \draw (7,-0.5) to (7,1);
    \end{tikzpicture}
    \caption{The partitioning that is output by the partition algorithm for the instance in Figure \ref{fig:k-partition-example}. This would take time 4 as shown in $S[3,6]$ in Table \ref{tab:dynamic-programming}}
    \label{fig:k-partitioning}
\end{figure}
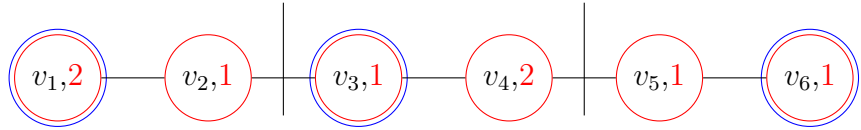

\begin{theorem}
    \label{thm:k-robots-partition-equal-length-task}
    Given an instance of $k$-\rsp on a path $P = (V, E)$ with equal duration tasks $T = (t_1, t_2, \dots, t_m)$ on vertices $v_{i_1}, v_{i_2}, \dots, v_{i_m}$ and $k$ robots $R_1, R_2, \dots, R_k$ starting at $sv_1, sv_2, \dots, sv_k = v_{j_1}, v_{j_2}, \dots, v_{j_k}$, there is no collision-free task-completing schedule for this instance taking less time than the schedule returned by $S_k(P, T, (sv_1, sv_2, \dots, sv_k))$. Further, this schedule can be found in $O(k m n)$ time.\looseness=-1
\end{theorem}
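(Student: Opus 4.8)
# Proof Proposal for Theorem \ref{thm:k-robots-partition-equal-length-task}

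The plan is to prove optimality by induction on the number of robots $c$, mirroring the structure already established for the $2$-robot case in Lemma \ref{lem:2-partition-equal-length-tasks}, and then to separately account for the running time. The base case $c = 1$ is exactly Lemma \ref{lem:1-rsp-opt}: $S_1(P, T, (sv_1))$ is by definition $C_1(P, T)$, which is the fastest single-robot schedule. For the inductive step, I would first establish the analogue of Lemma \ref{lem:contiguous-task-assignment} for $k$ robots: in an optimal collision-free schedule, we may assume that the tasks are partitioned among $R_1, \dots, R_k$ into \emph{contiguous blocks} ordered left to right, i.e. there exist indices $0 = r_0 \le r_1 \le \dots \le r_k = m$ such that $R_c$ completes exactly tasks $t_{r_{c-1}+1}, \dots, t_{r_c}$. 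This follows by repeatedly applying the pairwise exchange argument of Lemma \ref{lem:contiguous-task-assignment} to adjacent robots $R_c, R_{c+1}$: whenever $R_c$'s rightmost task lies right of $R_{c+1}$'s leftmost task, the same "complete on last visit" swap removes the crossing without increasing the time span and without introducing collisions (the two robots involved keep their relative left/right order throughout, so no new collision with any third robot is created).

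Given the contiguous-block structure, the key claim is that an optimal schedule decomposes: if $R_k$ handles tasks $t_{r+1}, \dots, t_m$, then the restriction of the schedule to $R_1, \dots, R_{k-1}$ is a collision-free schedule for the sub-instance on tasks $t_1, \dots, t_r$, and conversely any optimal schedule for that sub-instance can be combined with $C_1(P, (t_{r+1}, \dots, t_m), sv_k)$ without collision. The "without collision" direction is where the equal-duration hypothesis does its work, exactly as in Lemma \ref{lem:2-partition-equal-length-tasks}: because all tasks take the same time, one can arrange that $R_k$ stays strictly right of $R_{k-1}$ (each may need to first move toward its block, and the parity/synchronisation argument from the $2$-robot proof generalises — $R_k$ occupies contested vertices only on timesteps of a fixed residue class). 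I would also need the observation, again borrowed from Lemma \ref{lem:2-partition-equal-length-tasks}, that if $R_c$ starts right of every task in its block then replacing its schedule by $C_1(P_{\cdot}, T_c, sv_c)$ (which begins by walking left to the block) costs no more and stays collision-free. Combining these: $S[k, m] = \min_{r} \max(|C_1(P, (t_{r+1}, \dots, t_m))|, S[k-1, r])$ equals the optimum, since $S[k-1, r]$ is optimal for the sub-instance by the inductive hypothesis, and every value of $r$ corresponds to a realisable contiguous partition.

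For the running time, I would argue as follows. Row $1$ of the table, $S[1, \ell]$ for all $\ell \in [1, m]$, is computed in $O(m)$ total time using the incremental closed-form update from Corollary \ref{col:computing-1-rsp-time} — adding one more task on the right changes the single-robot cost by an $O(1)$-computable amount (after an initial $O(n)$ sort of the tasks by position via radix sort). For row $c$, computing $S[c, \ell]$ requires, for each of the $\ell$ candidate split points $r$, the value $|C_1(P, (t_{r+1}, \dots, t_\ell))|$; precomputing all single-robot costs on contiguous suffixes needed here again takes $O(m)$ amortised work per fixed right endpoint using the closed-form expression, so each entry $S[c,\ell]$ costs $O(\ell) = O(m)$ and the whole row costs $O(m^2)$. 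With $k$ rows this gives $O(km^2)$, plus the $O(n)$ preprocessing; since we may assume $m \le n$ (tasks sit on distinct vertices) this is within $O(kmn)$. The main obstacle I anticipate is the collision-freeness bookkeeping in the inductive step: one must verify carefully that gluing an optimal $(k-1)$-robot schedule for $t_1,\dots,t_r$ to $R_k$'s single-robot schedule for $t_{r+1},\dots,t_m$ genuinely produces no collision — in particular that the boundary vertex between the two blocks is not simultaneously occupied — and that this can always be arranged at no time cost using only waiting moves, which is precisely where the equal-duration assumption is essential and where the counterexample of Figure \ref{fig:alg-counter-example} shows the argument breaks without it.
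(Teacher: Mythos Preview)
Your proposal is correct and follows essentially the same approach as the paper: induction on the number of robots, a swapping argument to obtain a contiguous left-to-right block structure for the task assignment, and then decomposition into a single-robot schedule for $R_k$ on the rightmost block plus an optimal $(k-1)$-robot schedule on the remaining tasks via the inductive hypothesis. The paper carries out the swap directly on $R_k$ (trading with whichever robot currently holds $t_m$, iterating leftwards) rather than by repeated pairwise application of Lemma~\ref{lem:contiguous-task-assignment}, and it accounts for the $O(n)$ factor by charging schedule-writing per table entry rather than at the end, but the substance and the running-time bound are the same; if anything you are more explicit than the paper about the collision-freeness check when gluing the $(k-1)$-robot and single-robot pieces together.
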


\begin{proof}
    We prove this in an inductive manner, using $S_1(P, T, (sv_1, sv_2))$ as a base case.
    Assume that, for every $c \in [1, k - 1]$, $S_c(P, T, (sv_1, sv_2, \dots, sv_{c}))$ returns such a schedule.
    Now, consider the schedule given by $\mathcal{C} = (C_1, C_2, \dots, C_k) = S_k(P, T, (sv_1, sv_2, \dots, sv_k))$. Let $t_q$ be the leftmost task completed by $R_k$. Note that by construction, the schedule $S_{k - 1}(P, (t_1, t_2, \dots, t_{q - 1}), (sv_1, sv_2, \dots, sv_{k - 1}))$ must be the fastest collision-free schedule completing the tasks $t_1, t_2, \dots, t_{q - 1}$ with the robots $R_1, R_2, \dots, R_{k - 1}$ on $P$.\looseness=-1

    Assume, for the sake of contradiction, that there exists some schedule $\mathcal{C}'$ such that $\mathcal{C}' = (C_1', C_2', \dots, C_k')$ completes all tasks faster than $\mathcal{C}$. If $C_k = C_k'$ then we have a contradiction, as $(C_1', C_2', \dots, C_{k - 1}')$ must then complete $t_1, t_2, \dots, t_{q - 1}$ faster than $(C_1, C_2, \dots, C_{k - 1})$, contradicting the assumption that $S_{k - 1}(P, (t_1, t_2, \dots, t_{q - 1}), (sv_1, sv_2, \dots, sv_{k - 1}))$ is optimal.\looseness=-1

    Now, assume that $R_k$ does not solve $t_m$. Then, either every task is solved by some other robot, or there exists some task $t'$ left of $t_m$ that is solved by $R_k$. 
    Now, if the robot currently assigned to $t_m$ (which we shall call $R_j$)  were to now complete $t_q$ - the leftmost task assigned to $R_k$ (and similarly $R_k$ to complete $t_m$) the new schedule takes at most as long as before, since the tasks are of equal duration and the travel time can only decrease since $R_k$ is the rightmost robot and now completes the rightmost task. 
    Following this argument for each task assigned to $R_k$ from left to right gives a schedule in which $R_k$ solves tasks $t_m, t_{m - 1}, \dots, t_{r}$ for some $r \in [1, m]$. And since the fastest schedule for $R_k$ completing these tasks is given by $C_1(P, ($ $t_{r'}$, $t_{r' + 1}$, $\dots$, $t_m$ $)$, $sv_k)$, and the fastest task-completing schedule for the remaining tasks is given by $S_{k - 1}(P, (t_1, t_2, \dots, t_{r' - 1}), (sv_1, sv_2, \dots, sv_{k - 1}))$, the schedule given by $S_k(P, T, (sv_1, sv_2, \dots, sv_k))$ is therefore optimal.\looseness=-1

    For the time complexity, note that computing the table $\mathcal{S}$ requires $k \cdot m$ entries to be added, each needing $O(m)$ computations corresponding to each partition of the robots (the time span of the fastest schedule for one robot can be calculated in constant time by the formula in Corollary \ref{col:computing-1-rsp-time}) and an additional $O(n)$ time to write the updated schedule. As there are $k \cdot m$ entries, the total time complexity of this process is $O(k m^2 + k m n) = O(k m n)$.\looseness=-1

\end{proof}



\begin{theorem}
    \label{thm:k-robot-k-approximation}
    Given an instance of $k$-\rsp on a path $P = (V, E)$ with tasks $T = (t_1, t_2, \dots, t_m)$ on vertices $v_{i_1}, v_{i_2}, \dots, v_{i_m}$ and robots $R_1, R_2, \dots$, $R_k$ starting at $sv_1, sv_2, \dots, sv_k = v_{j_1}, v_{j_2}, \dots, v_{j_k}$, %
    the schedule returned by \newline $S_k(P, T, (sv_1, sv_2, \dots, sv_k))$ takes time at most a factor of $k$ of the 
    fastest collision-free task-completing schedule for this instance.
\end{theorem}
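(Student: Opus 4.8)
The plan is to mirror the structure of the equal-duration argument (Theorem~\ref{thm:k-robots-partition-equal-length-task}) and the factor-$2$ argument for two robots (Theorem~\ref{thm:2-partition-any-length-tasks}), pushing the latter's reasoning through the dynamic-programming recursion. Let $a$ denote the time span of the fastest collision-free task-completing schedule $\mathcal{C} = (C_1, C_2, \dots, C_k)$ for the given instance. The key structural observation is that in $\mathcal{C}$, the set of tasks completed by the robots $R_1, \dots, R_c$ (for any $c$) occupies vertices lying within a prefix of the path up to some index, and crucially each of the first $c$ robots, together, must \emph{visit} every vertex holding a task assigned to one of them within time $a$. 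The argument from Theorem~\ref{thm:2-partition-any-length-tasks} — that visiting all relevant vertices costs at most the time to do so, and then completing the single-robot optimal schedule over a contiguous block of tasks costs at most twice that — is the atom I want to iterate.

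First I would set up the right contiguous decomposition. By the same exchange argument as in Lemma~\ref{lem:contiguous-task-assignment} (which, as noted in the $k$-robot overview, generalises: robot $R_i$ can be made to complete a contiguous block of tasks, with $R_1$'s block leftmost and $R_k$'s rightmost), fix an optimal $\mathcal{C}$ in which $R_i$ completes exactly the tasks $t_{q_{i-1}+1}, \dots, t_{q_i}$ for breakpoints $0 = q_0 \le q_1 \le \cdots \le q_k = m$. Now I would argue by induction on $c$ that $S[c, q_c] \le c \cdot a$. For the base case $c = 1$: $R_1$ in $\mathcal{C}$ completes all tasks $t_1, \dots, t_{q_1}$ within time $a$, so in particular it visits every vertex $v_{i_1}, \dots, v_{i_{q_1}}$ at least once within time $a$; hence $\lvert C_1(P, (t_1, \dots, t_{q_1}), sv_1) \rvert \le a$, since $C_1$ returns the fastest single-robot task-completing schedule and $\mathcal{C}$'s restriction to $R_1$ is one such schedule — actually here we even get $\le a$, not $\le 2a$, because $R_1$ genuinely completes these tasks in $\mathcal{C}$. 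That already gives $S[1, q_1] \le a$.

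For the inductive step, suppose $S[c-1, q_{c-1}] \le (c-1) a$. I need to bound $S[c, q_c]$. By the recursion, $S[c, q_c] \le \max\bigl(\lvert C_1(P, (t_{q_{c-1}+1}, \dots, t_{q_c}), sv_c)\rvert,\ S[c-1, q_{c-1}]\bigr)$, so it suffices to bound the first term by $c\cdot a$. Here is where the factor-$2$ idea enters: in $\mathcal{C}$, robot $R_c$ need not complete all of $t_{q_{c-1}+1}, \dots, t_{q_c}$ starting from $sv_c$ in time $a$ — wait, actually by the contiguous decomposition it does complete exactly that block. So again we get $\lvert C_1(P, (t_{q_{c-1}+1}, \dots, t_{q_c}), sv_c)\rvert \le a$. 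The subtlety — and the main obstacle — is that the breakpoints $q_i$ used by the algorithm's optimal partition are \emph{not} the breakpoints $q_i^{\mathcal{C}}$ of the fixed optimal schedule; the algorithm minimises over all $r$, so $S[c, \ell] \le \max(\lvert C_1(P, (t_{r+1}, \dots, t_\ell))\rvert, S[c-1, r])$ for \emph{every} $r$, in particular for $r = q_{c-1}^{\mathcal{C}}$ and $\ell = q_c^{\mathcal{C}}$. Thus $S[c, q_c^{\mathcal{C}}] \le \max(a, (c-1)a) = \max(1, c-1)\, a \le c\, a$ (for $c \ge 1$), and by monotonicity of $S$ in $\ell$ together with $q_k^{\mathcal{C}} = m$, we conclude $S[k, m] \le k\, a$. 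The real work is checking that $S[c-1, q_{c-1}^{\mathcal{C}}]$ is legitimately $\le (c-1)a$ — i.e. that the restriction of $\mathcal{C}$ to robots $R_1, \dots, R_{c-1}$ completing tasks $t_1, \dots, t_{q_{c-1}^{\mathcal{C}}}$ is a valid collision-free schedule of span $\le a$, which it is as a sub-collection of a collision-free schedule — and in confirming that the contiguous-block exchange argument of Lemma~\ref{lem:contiguous-task-assignment} extends verbatim to $k$ robots without increasing the span. I would spend most of the write-up on that exchange argument, since everything downstream is then a clean induction plugging $r = q_{c-1}^{\mathcal{C}}$ into the DP recurrence. (The factor-$2$ slack from Theorem~\ref{thm:2-partition-any-length-tasks} is, on reflection, subsumed: the looseness comes not from a single robot overshooting but from the $k$-fold accumulation $a + a + \cdots + a$ across the chain of $k$ partition levels, which is exactly the factor $k$.)
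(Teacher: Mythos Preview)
Your argument has a genuine gap at its starting point: you invoke the exchange argument of Lemma~\ref{lem:contiguous-task-assignment} to assume that the optimal schedule $\mathcal{C}$ can be taken to assign each robot a contiguous block of tasks. But Lemma~\ref{lem:contiguous-task-assignment} is stated and proved only for \emph{equal-duration} tasks, and it does not extend to the general case. The counter-example in Figure~\ref{fig:alg-counter-example} exhibits an optimal two-robot schedule (span $7$) in which the left robot does tasks $t_1$ and $t_3$ while the right robot does $t_2, t_4, t_5$ --- a non-contiguous assignment --- and every contiguous partition has span at least $8$. So the optimal cannot be made contiguous without increasing the span.

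This is fatal for your induction, and in fact your argument proves too much: once you assume contiguity, you yourself observe that $\lvert C_1(P,(t_{q_{c-1}+1},\dots,t_{q_c}),sv_c)\rvert \le a$ for every $c$, since $R_c$ genuinely completes exactly that block in $\mathcal{C}$ within time $a$. Plugging $r = q_{c-1}^{\mathcal{C}}$ into the recurrence then gives $S[c, q_c^{\mathcal{C}}] \le \max(a,a) = a$ at every level, so $S[k,m] \le a$ --- i.e.\ the partition algorithm would be \emph{optimal}, contradicting the counter-example. Your closing remark about a ``$k$-fold accumulation $a + a + \cdots + a$'' does not rescue this: the DP recurrence takes a $\max$, not a sum, so nothing accumulates.

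The paper's proof avoids this trap by \emph{not} assuming contiguity of the optimal. Instead it exhibits one particular partition --- robot $R_i$ is given all tasks lying between the leftmost and rightmost tasks it was assigned in the (possibly non-contiguous) optimal $\mathcal{C}$ --- and bounds each block's single-robot cost by $k\cdot a$ directly: the movement is at most $a$ (since $R_i$ already traverses that interval in $\mathcal{C}$), while the extra task durations picked up are those done by the other $k-1$ robots inside the interval, totalling at most $(k-1)\,a$. That is where the factor $k$ actually comes from, not from any chaining through the DP.
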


\begin{proof}
    Let $\mathcal{C} = (C_1, C_2, \dots, C_k)$ be the fastest schedule solving the $k$-\rsp instance, and let $\mathcal{C}' = (C_1', C_2', \dots, C_k')$ be the schedule returned by $S_k(P, T, (sv_1, sv_2, \dots, sv_k))$. Further, let $a$ be the number of timesteps required to complete $\mathcal{C}$. Observe that in $a$ timesteps, there is sufficient time for each robot to complete all tasks assigned to it, as well as relevant movement, including having every robot move between the leftmost and rightmost tasks assigned to it. 
    Therefore, $R_1$ can complete all tasks between the leftmost and rightmost tasks completed in $C_1$ in at most $k \cdot a$ time.
    Repeating this argument gives the $k$ approximation.\looseness=-1

\end{proof}

\subsection{Extension to Cycles and Tadpoles}

We provide a brief extension to $k$-\rsp on cycles. In short, we apply the above algorithm to at most $O(n)$ instances of $k$-\rsp on a path, each formed by removing some distinct edge from the cycle. To prove the correctness of this approach, we provide the following key observation.

\begin{lemma}
    \label{lem:partitioning_the_cycle}
    Given an instance of $k$-\rsp on a cycle $G = (V, E)$ with a set of equal duration tasks $T = \{t_1, \dots, t_m\}$ and robots $r_1, \dots, r_k$, there exists a fastest collision-free task-completing schedule $C$ such that there exists some edge $e \in E$ that is not traversed by any robot in $C$.\looseness=-1
\end{lemma}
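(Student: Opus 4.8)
The plan is to take any fastest collision-free task-completing schedule $C$ and argue that if every edge of the cycle is traversed by some robot at some point, then we can modify $C$ into a schedule of the same (or smaller) time span that avoids at least one edge. The key intuition is that on a cycle, the only benefit of traversing ``all the way around'' is to reach a task from a shorter direction; but if \emph{every} edge is used, then collectively the robots' walks wind around the cycle, and we can reroute at least one robot to eliminate one edge without increasing its time span, while preserving collision-freeness.

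First I would set up notation: fix a fastest schedule $C = (C_1, \dots, C_k)$ and suppose for contradiction that for every edge $e \in E$ there is some robot whose walk representation $\mathcal{W}(C_i)$ contains $e$ (in either orientation) at some timestep. I would then consider the robot $R_j$ whose longest sub-walk in one rotational direction is maximal, or more carefully, look at the ``winding'' behaviour. The cleaner route, I think, is this: consider the set of vertices actually visited by each robot. Since tasks have equal duration, the time span of $C_i$ is (number of edge-moves in $\mathcal{W}(C_i)$ among non-self-loops) $+ m_i \cdot d$ where $m_i$ is the number of tasks $R_i$ completes and $d$ the common duration; so minimizing time span amounts to minimizing total movement subject to each robot visiting its assigned task vertices and the collision constraint. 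I would argue that in an optimal schedule, no robot traverses a strict majority of edges ``unnecessarily'': if robot $R_i$ ever completes a full loop around the cycle, its walk could be shortened, contradicting optimality (here one must be careful the shortening doesn't create collisions — but since shortening a walk only removes occupied vertex/edge slots, and we can re-time by inserting waits, this should go through, perhaps invoking an argument like the one in Lemma~\ref{lem:contiguous-task-assignment} about completing tasks on the last visit).

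The main technical step is then: show that the union of the vertex sets visited by the robots, if it forces every edge to be used, can be ``cut'' at one edge. Concretely, I would pick an edge $e = (u,v)$ and track which robots cross $e$ and in which direction. If $e$ is crossed a net of zero times by the whole system (total left-to-right crossings equal total right-to-left), I would try to pair up crossings and cancel one back-and-forth excursion across $e$ by some single robot, reducing its movement — contradicting optimality unless no robot crosses $e$ at all. If some edge has all its crossings in one direction, then following that direction the robots collectively ``rotate'' the cycle, and I would argue one can instead have each relevant robot service its tasks by going the other way around, which cannot increase any individual time span because the arc it would traverse is the complement and the task set each robot handles can be taken to be an arc (again leaning on a contiguity argument analogous to Lemma~\ref{lem:contiguous-task-assignment}, adapted to cycles). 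The hard part will be making the collision-freeness survive the rerouting: reversing a robot's direction or shortening its walk changes where it is at each timestep, so I would need a re-timing lemma — inserting waiting steps at the start or at task vertices — to re-synchronize and show no two robots ever coincide, using that there is now a free edge acting as a ``buffer'' through which no robot passes, effectively turning the cycle into a path where the $k$-partition reasoning applies.

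I expect the genuinely delicate point to be the interaction between the ``cut the cycle'' idea and collisions: it is tempting to say ``an optimal schedule on a cycle where some edge is free is just an optimal schedule on the induced path, so by Lemma~\ref{lem:2-partition-equal-length-tasks}/Theorem~\ref{thm:k-robots-partition-equal-length-task} contiguous arc-assignments suffice'' — but to \emph{get} to a free edge from an arbitrary optimal schedule one must exhibit the rerouting explicitly and verify timing. I would therefore structure the proof as: (1) reduce to the case where each robot's walk stays within an arc (no full loops), using optimality; (2) show the arcs of the $k$ robots, together with their task assignments, can be taken pairwise ``non-overlapping in the crossing sense'' so that some edge is common to none of them; (3) handle the residual timing/collision bookkeeping by inserting waits. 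Step (2) is where the real work lies, and if a slick global argument is elusive I would fall back on a minimal-counterexample argument: among all fastest schedules, take one minimizing total edge-traversals; show it cannot use every edge, else a local exchange reduces the total, contradicting minimality.
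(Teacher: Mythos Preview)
Your contradiction framework and your fallback idea (among fastest schedules, pick one minimising total edge traversals, then exhibit a local exchange) are both sound, and the latter is essentially what the paper does. The paper's argument is more direct than your winding-number / crossing-direction analysis: since robots cannot pass through one another on the cycle, the arc between the starting positions of cyclically consecutive robots $r_i$ and $r_{i+1 \bmod k}$ can only be traversed by those two. If every edge is covered, each such arc is jointly covered and the two robots' walks overlap somewhere on it. The paper then performs a single \emph{cyclic task reassignment}: each $r_i$ sheds its furthest task in the direction of $r_{i+1}$ together with the excursion to it, and picks up the analogous task shed by $r_{i-1}$, completing it at the timestep when $r_i$ was already at that vertex. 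Every robot now does the same number of (equal-duration) tasks but strictly fewer moves, contradicting optimality. This is precisely your step~(2) made concrete, without any appeal to net crossing counts or arc complements.

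Two of your proposed sub-arguments would not work as written. First, ``send the robot the other way around; the complementary arc cannot be longer'' is simply false --- the complement of an arc can be much longer, so this gives no bound on the rerouted robot's time span. Second, ``re-synchronise by inserting waits'' threatens the very quantity you are trying to preserve: waits add timesteps, so you cannot freely insert them and still claim the resulting schedule is fastest. The paper sidesteps both issues because its modification only \emph{deletes} a terminal excursion from each robot's walk and relocates one task-completion to a timestep at which that robot already occupied the relevant vertex; the new walk is essentially a truncation of the old one, so collision-freeness is inherited from the original schedule rather than re-engineered. Your minimal-counterexample fallback is the right instinct and is much closer to the paper than your primary crossing-direction line; lead with it, and make the local exchange be the cyclic hand-off described above.
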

\begin{proof}
    We assume the contrary, assume that the vertex $v_{i_j}$ containing $t_j$ satisfies $i_{j-1 \bmod{n} } < i_j < i_{j + 1 \bmod{n}}$, for $j \in [n]$. 
    Similarly, we assume that robots $r_i$ and $r_{i + 1}$ have starting vertices $sv_i$ and $sv_{i + 1}$  respectively, such that $sv_i < sv_{i +1}$, for every $i \in [1, k - 1]$.

    Now, note that for every fastest schedule, for every $i \in [1, k]$, $r_i$ must visit some vertex $v_{x}$ that is also visited by $r_{i + 1 \bmod k}$, and further, $r_i$ must complete some task $t_i$ located on the vertex $v_y$ where $y \geq x$ (or, in the case of $r_k$, the path assigned to $r_k$ uses the edges $$(v_{x}, v_{x + 1 \bmod n}), (v_{x + 1 \bmod n}, v_{x + 2 \bmod n}) \dots (v_{y - 1 \bmod m}, v_y)$$. Now, following the same arguments as above, we can construct a new schedule by assigning to $r_{i}$ the task $t_{i - 1 \bmod k}$, completing the task at the point where $r_i$ visits the vertex containing $t_i$, and removing the task $t_i$, along with the traversal to the vertex containing $t_i$ from the previous vertex in the schedule containing a task completed by $r_i$. Observe that this new schedule takes strictly fewer timesteps than the original schedule as each robot completes the same number of tasks and traverses fewer edges.
\qed
\end{proof}
With Lemma \ref{lem:partitioning_the_cycle}, we can solve the problem of $k$-\rsp on a cycle by checking each of the path graphs formed by removing exactly one edge from the cycle and choosing the best solution.\looseness=-1

\begin{theorem}
    \label{thm:k-robot-cycle}
    Given an instance of $k$-\rsp on a cycle $G = (V, E)$ containing $n$ vertices and the set of tasks $T = \{t_1, t_2, \dots, t_m\}$ all of equal duration, a fastest collision-free task-completing schedule can be found in $O(k m n^2)$ time.\looseness=-1
\end{theorem}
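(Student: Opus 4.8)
The plan is to reduce $k$-\rsp on a cycle to $k$-\rsp on a path by exploiting Lemma \ref{lem:partitioning_the_cycle}. That lemma tells us that some fastest collision-free task-completing schedule leaves at least one edge of the cycle untraversed by every robot; call that edge $e$. If we knew $e$ in advance, then deleting $e$ from $G$ yields a path graph $P_e$ on the same vertex set, and the restriction of the optimal cycle schedule to $P_e$ is a valid collision-free task-completing schedule for the $k$-\rsp instance on $P_e$ with the same starting vertices, task locations, and task durations. Conversely, any schedule on $P_e$ is trivially a schedule on $G$ (it just never uses $e$). Hence the optimal time span on $G$ equals $\min_{e \in E} (\text{optimal time span of } k\text{-\rsp on } P_e)$.

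The algorithm is therefore: for each of the $n$ edges $e \in E$, form the path graph $P_e = (V, E \setminus \{e\})$, relabel its vertices $1, \dots, n$ along the path (updating the indices of the task vertices and robot starting vertices accordingly, and re-sorting the tasks left-to-right, which costs $O(m + n)$ via radix sort), run the $k$-partition algorithm of Theorem \ref{thm:k-robots-partition-equal-length-task} on this path instance, and record the resulting schedule and its time span. Finally, output the schedule of minimum time span over all $n$ choices of $e$. Correctness follows from Lemma \ref{lem:partitioning_the_cycle} together with the optimality of $S_k$ on paths for equal-duration tasks (Theorem \ref{thm:k-robots-partition-equal-length-task}): the best over all $P_e$ is no slower than the optimum on $G$ (since some $P_e$ retains an optimal cycle schedule), and no faster (since every path schedule is a legal cycle schedule).

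For the running time: there are $n$ edges to try; for each, building $P_e$ and relabelling costs $O(m + n)$, and running $S_k$ on a path of length $n$ with $m$ tasks and $k$ robots costs $O(kmn)$ by Theorem \ref{thm:k-robots-partition-equal-length-task}. Summing over the $n$ edges gives $O(n \cdot (kmn + m + n)) = O(k m n^2)$, as claimed.

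I expect the only real subtlety to be the equivalence argument — specifically, verifying that restricting a fastest cycle schedule to $P_e$ really does give a valid path schedule with the same time span (the robots' walks never use $e$, so no edges or waiting steps are lost, collision-freeness and task-completion are preserved verbatim), and that the vertex relabelling correctly transports the "robot $R_i$ starts left of $R_{i+1}$" ordering assumption required by the $k$-partition algorithm. Since cutting at $e$ may reorder which robot is leftmost, one must permute the robot labels along $P_e$ before invoking $S_k$; this is a routine $O(k \log k)$ (or $O(k)$) step but should be stated. Everything else is a direct application of the already-established path algorithm.
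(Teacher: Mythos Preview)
Your proposal is correct and follows essentially the same approach as the paper: invoke Lemma \ref{lem:partitioning_the_cycle} to guarantee an unused edge, iterate over all $n$ edge deletions, solve each resulting path instance via Theorem \ref{thm:k-robots-partition-equal-length-task} in $O(kmn)$ time, and take the minimum for a total of $O(kmn^2)$. The paper's proof is slightly terser and does not spell out the relabelling and robot-reordering steps you flag, but these are routine and your inclusion of them does not constitute a different argument.
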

\begin{proof}
    From Lemma \ref{lem:partitioning_the_cycle}, we know that there exists a fastest task-completing schedule for this instance that does not traverse every edge in the graph. Therefore, this schedule is equivalent to a fastest schedule on the path graph $P$ formed by removing some edge $e$ from $G$, where $e$ is not traversed in the schedule. As there are $n$ edges in $G$, we can find a fastest schedule for this instance by finding the fastest task-completing schedule on the graph $P = \{V, E \setminus \{e\}\}$, for any $e \in E$. As finding a fastest schedule for each $P$ takes $O(kmn)$ time, the total complexity of this process is $O(k m n^2)$.
\end{proof}
\begin{corollary}
    For general sets of tasks (i.e. not necessarily having equal duration) the algorithm described in the proof of Theorem \ref{thm:k-robot-cycle} is a $k$-approximation.
\end{corollary}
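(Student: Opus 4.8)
The plan is to combine the $k$-approximation guarantee for paths (Theorem~\ref{thm:k-robot-k-approximation}) with the edge-removal strategy already used in the proof of Theorem~\ref{thm:k-robot-cycle}. Let $\mathcal{C}^\ast$ be a fastest collision-free task-completing schedule for the given cycle instance, with time span $a$. The key point is that \emph{any} schedule on a cycle with $n$ vertices can leave at least one edge untraversed only if it is ``short enough'', but in general a fastest schedule on a cycle may well use every edge when task durations are unequal, so Lemma~\ref{lem:partitioning_the_cycle} no longer applies directly. Hence I would not try to reduce exactly to a path; instead I would argue about cost.

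First I would observe the following structural fact: in the optimal cycle schedule $\mathcal{C}^\ast$, consider robot $R_i$ and let $\ell_i, r_i$ be the ``extreme'' vertices it reaches, so that the set of vertices visited by $R_i$ is a contiguous arc of the cycle from $\ell_i$ to $r_i$ (in one of the two directions). In $a$ timesteps, $R_i$ has enough time to complete all tasks assigned to it plus enough movement to sweep its entire arc. Now fix any single edge $e$ that is \emph{not} visited by $R_1$ in $\mathcal{C}^\ast$ (such an edge exists since $R_1$ starts somewhere and, if it visited all $n$ edges, that alone would contribute at least $n$ to $a$, and one can then pick $e$ incident to a vertex only visited once — but more simply, it suffices to note that for \emph{at least one} robot such an untraversed edge exists, or to iterate over all $n$ choices of removed edge as the algorithm does). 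When we delete $e$ from $G$ we obtain a path $P_e$; on $P_e$, each robot's visited arc is still a sub-path, so the same counting argument as in Theorem~\ref{thm:k-robot-k-approximation} shows that the $k$-partition algorithm run on $P_e$ produces a schedule of time span at most $k\cdot a$: each robot can, within $k a$ timesteps, travel between the leftmost and rightmost tasks it is assigned in $\mathcal{C}^\ast$ and complete them all, and the optimal single-robot subroutine $C_1$ only does better.

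The second step is to turn this into an algorithmic statement. The algorithm of Theorem~\ref{thm:k-robot-cycle} enumerates all $n$ path graphs $P_e = (V, E\setminus\{e\})$, runs the $O(kmn)$ path algorithm on each, and returns the best resulting schedule; every such schedule is a valid collision-free task-completing schedule for the cycle (a schedule that never uses $e$ on the path is still valid on the cycle). By the previous paragraph, for the particular edge $e^\ast$ not traversed by $R_1$ in $\mathcal{C}^\ast$, the algorithm's output on $P_{e^\ast}$ has time span at most $k\cdot a$, so the overall best-of-$n$ output has time span at most $k \cdot a = k \cdot \mathrm{OPT}$. The running time is unchanged at $O(kmn^2)$.

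The main obstacle is making the ``some edge is not traversed by some robot in the optimal schedule'' claim airtight without the equal-duration hypothesis of Lemma~\ref{lem:partitioning_the_cycle}. I would handle this cleanly by not requiring the optimal schedule to avoid a specific edge at all: instead, I would pick $e^\ast$ to be any edge incident to $R_1$'s starting vertex $sv_1$ on the side away from $R_2$ (using the cyclic ordering of the $k$ starting vertices), define $P_{e^\ast}$ accordingly, and then argue that the \emph{restriction/rerouting} of $\mathcal{C}^\ast$ to $P_{e^\ast}$ — where we keep each robot within the arc of $P_{e^\ast}$ it already occupies — still completes all tasks in time at most $k\cdot a$ by the arc-sweeping bound, which is all Theorem~\ref{thm:k-robot-k-approximation}'s proof actually needs. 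If even this is delicate, the fallback is simply: since the algorithm tries all $n$ edges and at least one of them is genuinely unused by $\mathcal{C}^\ast$ (if $\mathcal{C}^\ast$ used all $n$ edges its time span would be at least $n$, and the trivial path schedule already matches that up to the factor $k$), the $k$-approximation holds. I would present whichever of these is shortest; the counting bound from Theorem~\ref{thm:k-robot-k-approximation} is the real engine and the rest is bookkeeping.
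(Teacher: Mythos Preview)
Your plan is precisely the paper's (implicit) argument---remove an edge and invoke Theorem~\ref{thm:k-robot-k-approximation}---and you are right to flag that Lemma~\ref{lem:partitioning_the_cycle} is stated only for equal-duration tasks. But your two workarounds do not close that gap. In the main argument you remove an edge $e$ not traversed by $R_1$ and then assert that ``on $P_e$, each robot's visited arc is still a sub-path''. This fails for any robot $R_j$ that \emph{does} traverse $e$ in $\mathcal{C}^\ast$: its arc wraps around the cut and splits into two pieces on $P_e$, so the ``leftmost--rightmost'' sweep bound behind Theorem~\ref{thm:k-robot-k-approximation} no longer controls $R_j$'s cost, and you lose the contiguous partition witnessing time span $\le k\cdot a$.

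The fallback is also incorrect. From ``$\mathcal{C}^\ast$ uses all $n$ edges'' you can only conclude that \emph{some} robot traverses at least $n/k$ edges, hence $a \ge n/k$, not $a \ge n$; the $n$ edges may be shared among the $k$ robots. With only $ka \ge n$, a single-robot sweep of any $P_e$ costs at most $2n + \sum_j |t_j| \le 2ka + ka = 3ka$, which is a $3k$-bound, not a $k$-bound. What the corollary actually needs is exactly the conclusion of Lemma~\ref{lem:partitioning_the_cycle} for arbitrary durations (that some optimal cycle schedule leaves an edge unused, so that $\mathrm{OPT}_{P_e}\le\mathrm{OPT}_{\text{cycle}}$ for that $e$); the paper states the corollary without proof and does not supply this extension, and neither does your proposal, so the statement is left without a complete argument in both places.
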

Finally, we look at \emph{tadpole graphs}. A graph $G = (V, E)$ is a $(m, n)$-tadpole graph if there exists a pair $V_1, V_2 \subseteq V$ such that $V_1 \cap V_2 = \emptyset,\ V_1 \cup V_2 = V$,\ $\vert V_1 \vert = m,$ where the subgraph $(V_1, V_1 \times V_1 \cap E)$ is a cycle, $\vert V_2 \vert = n$, and the subgraph $(V_2, V_2 \times V_2 \cap E)$ is a path, and $\vert V_1 \times V_2 \cap E \vert = 1$. An example of this is given in Figure \ref{fig:tadpole}. We provide two key tools that are used to solve the full problem. First, we show that that we can solve the $2$-\rsp problem on a tree with a single vertex of degree $3$, and every other vertex having degree $2$ or $1$. Secondly, we show that for any instance of $k$-\rsp on a tadpole graph with equal-length tasks, there is an optimal, collision-free schedule where, at most, two robots complete tasks located on both the cycle and the path.\looseness=-1
\begin{figure}
    \centering
    \begin{tikzpicture}[scale=1]
    \tikzset{vertex/.style = {shape=circle,draw,minimum size=2em}}
    \tikzset{edge/.style = {-}}
        \def \n {5}
        \def \radius {1.75cm}
        \def \margin {8} 

        \foreach \s in {1,...,\n}
        {
        \node[vertex] (c_\s) at ({360/\n * (\s - 1)}:\radius) {$c_\s$};
  \draw ({360/\n * (\s - 1)+\margin}:\radius) 
    arc ({360/\n * (\s - 1)+\margin}:{360/\n * (\s)-\margin}:\radius);
}       
     \node[vertex] (p_1) at (2.75,0) {$p_1$};
     \node[vertex] (p_2) at (3.75,0) {$p_2$};
     \node[vertex] (p_3) at (4.75,0) {$p_3$};
     \node[vertex] (p_4) at (5.75,0) {$p_4$};
     \draw[edge] (p_1) to (c_1);
     \draw[edge] (p_1) to (p_2);
     \draw[edge] (p_3) to (p_2);
     \draw[edge] (p_3) to (p_4);
    \end{tikzpicture}
    \caption{(5,4)-Tadpole graph}
    \label{fig:tadpole}
\end{figure}
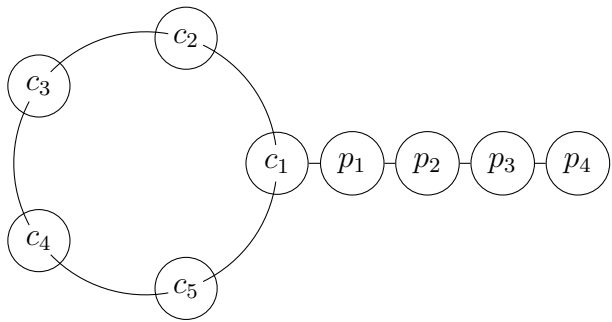
\begin{lemma}
    \label{lem:completing_on_simple_trees}
    Given an instance of $2$-\rsp with $m$ tasks on a tree $T$ where $T$ contains at most one vertex of degree 3, and every other vertex has degree one or two, we can determine a fastest collision-free task-completing schedule in $O(m^2)$ time.\looseness=-1
\end{lemma}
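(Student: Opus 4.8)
The plan is to mimic the structure of the path algorithms, exploiting the fact that the tree $T$ is a "spider with three legs": a single branch vertex $b$ of degree $3$ and three paths $L_1, L_2, L_3$ emanating from it (some possibly trivial). I would first normalise the instance: compute, for each leg $L_i$, the ordered list of tasks on that leg by distance from $b$, and note where each of the two robots starts (each robot lies on one of the three legs, or at $b$ itself). The fastest task-completing schedule must have each robot traverse a walk that covers the vertices carrying its assigned tasks; on a spider this means each robot's walk is confined to (at most) two legs plus the branch vertex, since covering tasks on all three legs would force a return through $b$ that a second robot could handle more cheaply. So the first structural claim to prove is an analogue of Lemma \ref{lem:contiguous-task-assignment}: there is a fastest schedule in which the task set is partitioned "geometrically" — the two robots split the three legs (and the tasks within a shared leg are split into a contiguous prefix/suffix around $b$), so the number of candidate partitions is polynomial (essentially $O(m)$ choices of split point times $O(1)$ ways to assign legs).

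Next I would handle collision-freeness, which is the genuinely new ingredient compared with the path case, since on a path two robots heading toward each other can always be sequenced, whereas here the danger is both robots wanting to pass through the single bottleneck vertex $b$ in the same timestep. The key observation is that, because all tasks have equal duration (call it $d$), one can use the same parity trick as in Theorem \ref{thm:star-hardness}: by inserting at most a constant number of waiting timesteps at the start of one robot's schedule, we can shift its timeline so that the two robots are never at $b$ simultaneously, at an additive cost of $O(1)$. More carefully, if the two robots' walks both use $b$, their intended visit times to $b$ form two finite sets; a bounded delay on one robot desynchronises them, and since $T$ is a tree there is no alternative route to worry about. I would then argue that for each of the polynomially many geometric partitions, the cost of the resulting schedule is computable in $O(1)$ from a closed-form expression analogous to Corollary \ref{col:computing-1-rsp-time} (the cost of a single robot covering a prescribed subtree of a spider — enter each of its two legs, go out to the far task, come back, plus the total task duration, minus a saving on whichever leg is traversed last), plus the $O(1)$ collision-avoidance padding. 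Taking the minimum over all $O(m)$ partitions, and over the $O(1)$ leg-assignment patterns, gives the optimum in $O(m)$ per partition, hence $O(m^2)$ overall — matching the claimed bound (the extra factor of $m$ over the path's $O(m)$ coming from, e.g., needing to recompute a one-robot-on-a-spider cost that itself is a minimisation over which leg to do last, or simply from being less careful with amortisation than in Lemma \ref{lem:2-partition-equal-length-tasks}).

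The main obstacle I anticipate is making the "geometric partition suffices" claim fully rigorous on the spider: on a path, Lemma \ref{lem:contiguous-task-assignment} used the clean fact that one robot is always to the left of the other, but on three legs the left/right order is replaced by a more delicate case analysis — which robot "owns" the branch vertex, whether a robot's home leg is the same as a leg it must service, and whether it is ever advantageous for a robot to service tasks on two different legs while the other robot services the third. I would isolate this in a preliminary claim (probably by an exchange argument: if an optimal schedule has the two robots' serviced-task-sets "interleaved" across legs in a bad way, swap the offending tasks and complete each on the last visit before crossing $b$, exactly as in the proof of Lemma \ref{lem:contiguous-task-assignment}, showing the cost does not increase) and then the rest is bookkeeping plus the parity argument for collisions.
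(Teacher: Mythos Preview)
Your overall architecture---prove a contiguous/geometric partition lemma by an exchange argument, then brute-force over the $O(m^2)$ candidate partitions---is exactly what the paper does. The difference is in the decomposition. You treat $T$ as a three-legged spider and then have to case-split on which robot owns which leg(s), whereas the paper makes one clean cut: it takes $P_1$ to be the (unique) path containing \emph{both} starting vertices $v_a,v_b$ \emph{and} the degree-$3$ vertex $v_c$, and lets $P_2$ be the remaining arm. With that choice, the two robots already sit on a common path $P_1$ in a fixed left/right order, so Lemma~\ref{lem:contiguous-task-assignment}/Theorem~\ref{thm:k-robots-partition-equal-length-task} applies verbatim to $P_1$, and the same argument is then re-run on $P_2$. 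This buys the paper a much shorter proof: no leg-assignment case analysis, and no separate collision argument, because the path-case exchange lemma already delivers a collision-free schedule on $P_1$. Your three-leg view is not wrong, but it forces you to reprove on the spider what the paper gets for free by reducing to two path instances.

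One genuine concern with your route: the parity/delay trick for collisions at $b$ is not safe as stated. You say you will insert ``at most a constant number of waiting timesteps'' at an ``additive cost of $O(1)$'', but the lemma asks for a \emph{fastest} schedule, not fastest-up-to-$O(1)$. To make this rigorous you would need to show that some optimal schedule already has the geometric-partition form \emph{without} padding (i.e., that the exchange argument preserves collision-freeness as well as makespan), rather than producing a padded schedule and hoping the constant is absorbed. The paper sidesteps this entirely: because both robots live on $P_1$ and keep their relative order there, the collision-freeness comes along with the contiguity argument rather than being patched in afterwards.
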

\begin{proof}
    We partition $T$ into two paths $P_1$ and $P_2$ such that $P_1$ is a path containing the starting vertices $v_a$ and $v_b$ of the robots $r_a$ and $r_b$, and the connecting vertex $v_c$, where $d(v_c) = 3$. Observe that, following the same arguments as in Theorem \ref{thm:k-robots-partition-equal-length-task}, there must exists some fastest collision-free task-competing schedule such that there exists a pair of paths $P_{1, a}$ and $P_{1, b}$ where $P_{1,a}$ connects all tasks located on $P_1$ completed by $r_a$, $P_{1, b}$ connects all tasks completed by $r_b$ on $P_1$, and $P_{1, a}$ is not connected to $P_{1, b}$. By the same arguments, such a pair $P_{2, a}$ and $P_{2, b}$ exists for the tasks completed by $r_a$ and $r_b$ respectively on $P_2$. Therefore, the problem becomes determining the optimal partitioning of the tasks between $r_a$ and $r_b$ while maintaining this property, which may be done via brute force in $O(m^2)$ time for $m$ tasks.
\qed
\end{proof}


\begin{lemma}
    \label{obs:only_one_robot_on_both_path_and_cycle}
    Given an instance of $k$-\rsp on a $(m, n)$-tadpole graph with the set of tasks $T$, there exists a fastest collision-free task-completing schedule $C$, where at most two robots $r_a$ and $r_b$ complete a task on both the cycle and the path in $G$. Further, there exists some tree $\mathcal{T}$ spanning all the tasks completed by either $r_a$ or $r_b$, without containing any other node in which a task not completed by either robot is located.
\end{lemma}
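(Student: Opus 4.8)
The plan is to start from an arbitrary fastest collision-free task-completing schedule $C$ and push it, by a sequence of exchange/normalisation steps in the spirit of Lemma \ref{lem:contiguous-task-assignment}, Theorem \ref{thm:k-robots-partition-equal-length-task} and Lemma \ref{lem:partitioning_the_cycle}, into one with the claimed structure, never increasing the time span. Throughout, let $c_1$ be the unique degree-$3$ vertex, with path arm $p_1, p_2, \dots, p_n$ attached at $c_1$ (so $(c_1,p_1)\in E$) and cycle $c_1, c_2, \dots, c_m, c_1$; call a robot a \emph{path robot} if every task it completes lies on a path vertex, a \emph{cycle robot} if every task it completes lies on a cycle vertex, and \emph{mixed} otherwise. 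A mixed robot must traverse the edge $(c_1,p_1)$, hence visits both $c_1$ and $p_1$. We use, as elsewhere in this section, that all tasks have equal duration.

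First I would normalise the path arm. Treating $p_1,\dots,p_n$ exactly as in Lemma \ref{lem:contiguous-task-assignment} and Theorem \ref{thm:k-robots-partition-equal-length-task}, with the junction end playing the role of ``left'', re-assign the path tasks among the robots that complete path tasks so that the vertex sets they service form pairwise-disjoint contiguous intervals ordered along the arm, each such robot completing its tasks on its last visit to its interval; this creates no new collisions on the arm and does not increase the time span. Next I would normalise the cycle: if, after this, no robot ever traverses one of the two cycle edges incident to $c_1$, then (as observed before Theorem \ref{thm:k-robot-cycle}) the whole instance collapses to a path instance, Theorem \ref{thm:k-robots-partition-equal-length-task} yields an optimal schedule with contiguous regions, and at most one robot straddles $c_1$, so we are done; otherwise, arguing as in Lemma \ref{lem:partitioning_the_cycle}, there is a fastest schedule in which some cycle edge $e$ \emph{not} incident to $c_1$ is untraversed, and deleting $e$ turns the cycle into an arc through $c_1$, on which I repeat the contiguity argument, so that the robots completing cycle tasks service pairwise-disjoint sub-arcs, at most two of which — one on the $c_2$-side and one on the $c_m$-side — are adjacent to $c_1$.

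After these steps the task vertices are covered by pairwise-disjoint connected service regions, and the only regions that can contain both a cycle task and a path task are those touching the junction $c_1$, which by the normalisation touch it from at most the two cycle sides and the one path side. Redistributing the finitely many tasks located on $\{c_m,c_1,c_2,p_1\}$ (and the walks through $c_1$) among exactly these robots, precisely as in Lemma \ref{lem:completing_on_simple_trees} applied to the local tree whose only degree-$3$ vertex is $c_1$, shows that at most two robots $r_a,r_b$ need to complete tasks on both sides, and that the union of their service regions is a connected subgraph of the tadpole using no cycle edge other than those already traversed — hence a tree $\mathcal{T}$ consisting of a sub-arc of the cycle through $c_1$ together with a prefix of the path arm. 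Since all remaining regions are disjoint from $\mathcal{T}$ and cover the remaining task vertices, $\mathcal{T}$ carries no task completed by a robot other than $r_a$ or $r_b$, which is the second half of the statement.

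The delicate part is the interaction at the junction: I expect the main obstacle to be checking that the cycle-normalisation (choosing an unused non-junction edge, then re-sorting the cycle regions) can be carried out \emph{together} with the path-normalisation without the two re-assignments conflicting at $c_1$, and that all exchanges preserve collision-freeness \emph{globally} — a re-routed robot may now pass through $c_1$ (or $p_1$) at new timesteps — rather than only on the arm and arc considered in isolation. Tied to this is pinning down exactly why two mixed robots suffice and one may not: the junction is a degree-$3$ vertex, so the local picture is precisely the tree handled in Lemma \ref{lem:completing_on_simple_trees}, for which two robots are in general necessary but always enough, and this is what must be imported carefully to bound the number of mixed robots by two.
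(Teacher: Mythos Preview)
Your proposal is essentially the same strategy as the paper's: start from an optimal schedule and repeatedly apply the equal-duration task-swap/contiguity argument of Theorem~\ref{thm:k-robots-partition-equal-length-task} (together with the cycle observation of Lemma~\ref{lem:partitioning_the_cycle}) to push the schedule into one where the mixed robots are confined to a small connected region around the junction, and then observe that this region supports at most two such robots. The only real difference is organisational: you normalise by \emph{location} (first the arm, then the arc, then reconcile at $c_1$), whereas the paper normalises by \emph{starting side} (first reduce the path-starting robots to at most one mixed robot, then trade between path-starting and cycle-starting robots, then among the cycle-starting robots pick the two ``extreme'' ones through which all others must pass). Both routes lean on exactly the same exchange mechanism and at the same level of rigour; the delicate interaction at $c_1$ that you flag is precisely the step the paper handles by the starting-vertex argument rather than by invoking Lemma~\ref{lem:completing_on_simple_trees}, so be aware that Lemma~\ref{lem:completing_on_simple_trees} as stated gives an algorithm for $2$-\rsp on such a tree rather than a direct bound on the number of mixed robots, and you will still need the ``two extreme starting vertices'' observation to cleanly reduce three potential junction robots to two.
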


\begin{proof}
    Assume, for the sake of contradiction, that no such schedule exists. Let $C$ be some collision-free fastest task-completing schedule where the robots $r_1, \dots, r_i$ visit both the path and the cycle in $G$. First, consider the robots $r_{p_1}, r_{p_2}, \dots, r_{p_j}$ as the robots starting on the path, and further assume that these robots are ordered such that the starting vertex of $r_{p_i}$ is located further from the cycle than that of $r_{p_{i - 1}}$. Then, using the same arguments as in Theorem \ref{thm:k-robots-partition-equal-length-task}, a collision-free task-completing schedule can be constructed in which only a single robot $r_{p_{k}}$ completes a task both on the cycle and the path, and further, for any $k' \in [1, k - 1]$, $r_{p_{k'}}$ only completes tasks on the cycle. Now, consider the robots $r_{c_1}, r_{c_2}, \dots r_{c_{h}}$ such that $r_{c_i}$ starts on the cycle and completes some task both on the cycle and on the path. If there exists at least one robot $r_{p_g}$ that completes a task on the cycle, then using the same arguments as Theorem \ref{thm:k-robots-partition-equal-length-task}, we can trade tasks between the robots until either no robot on the path completes a task on the cycle, or no robot on the cycle completes a task on the path. Now, assume that the set of robots $r_{c_1},r_{c_2}, \dots, r_{c_{h}}$ are the only robots that complete tasks on both the path and the cycle. Then, observe that there exists at most 2 robots, $r_{c_i}$ and $r_{c_j}$ where every robot in $r_{c_1},r_{c_2}, \dots, r_{c_{h}}$ must pass over the starting vertex. Again, we may swap tasks between the robots such that at most two robots complete tasks on both the cycle and the path. Further, these robots must complete the some subset of tasks $T'$ such that there exists a tree $\mathcal{T}$ connecting every vertex containing a task from $T'$, and such that no task in $T \setminus T'$ is located on a vertex in $\mathcal{T}$, completing the proof.

\end{proof}
At a high level, the idea is to partition the set of robots into three subsets, those completing tasks on the cycle, those completing tasks on the path, and those completing tasks on both. Using Lemma \ref{obs:only_one_robot_on_both_path_and_cycle}, we show that we can find a fastest task completing schedule where at most two robots complete a task both on the cycle and on the path. Thus, we end up with at most $O(k^2)$ such sets, noting that any robot between the two robots that are completing tasks on both the cycle and the path must only complete tasks on the path. For each such set, we partition the set of tasks on the cycle (resp. on the path) between the robots completing tasks only on the cycle (resp. on the path), and those completing tasks on both, with $O(m^2)$ possible partitions in the worst case. For each of these partitions, we use Theorems \ref{thm:k-robots-partition-equal-length-task} and \ref{thm:k-robot-cycle} along with Lemma \ref{lem:completing_on_simple_trees} to find an optimal solution to this partition. Finally, we choose the fastest such partition as our solution.\looseness=-1

\begin{theorem}
    \label{thm:k-rsp-on-tadpoles}
    Given an instance of $k$-\rsp on a tadpole graph $G$ with $n$ vertices with $m$ tasks, a fastest collision-free task-completing schedule can be found in $O(k^3 m^4 n)$ time.\looseness=-1
\end{theorem}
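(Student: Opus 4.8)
The plan is to brute-force over a polynomial family of structural "shapes" that a fastest schedule can take, solve each shape exactly using the path, cycle, and simple-tree algorithms already established (Theorems \ref{thm:k-robots-partition-equal-length-task}, \ref{thm:k-robot-cycle} and Lemma \ref{lem:completing_on_simple_trees}), and return the best. The correctness of this enumeration rests entirely on the two structural lemmas above: Lemma \ref{obs:only_one_robot_on_both_path_and_cycle} guarantees a fastest schedule in which at most two robots $r_a, r_b$ complete tasks on both the cycle and the path, and in which the tasks handled by $r_a, r_b$ are spanned by a tree $\mathcal{T}$ carrying no other task; Lemma \ref{lem:contiguous-task-assignment} (and its generalisation inside the proof of Theorem \ref{thm:k-robots-partition-equal-length-task}) lets us assume every other robot handles a contiguous block of tasks.

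First I would set up the enumeration. Since the robots are linearly/cyclically ordered around the tadpole, the identity of the two boundary robots $r_a, r_b$ is fixed by a choice of at most two robot indices, giving $O(k^2)$ options; this also fixes which robots form the \emph{cycle-only} group and which form the \emph{path-only} group. For each such choice I would enumerate how the task set splits among the three groups: the set $T'$ handled by $\{r_a, r_b\}$ meets the cycle in an arc and the tail in a prefix, and (as in Lemma \ref{lem:partitioning_the_cycle}) one may also fix which way around the cycle $\mathcal{T}$ is routed, so a polynomial number of task partitions — $O(m^3)$ in the worst case — exhausts all cases. For a fixed (robot-grouping, task-partition) pair the three subproblems are: (i) the cycle-only robots on the unused arc of the cycle, solved optimally as a path instance by Theorem \ref{thm:k-robots-partition-equal-length-task} (or, if genuinely a cycle instance, by Theorem \ref{thm:k-robot-cycle}); (ii) the path-only robots on the tail, again by Theorem \ref{thm:k-robots-partition-equal-length-task}; and (iii) the at most two robots $r_a, r_b$ on $\mathcal{T}$, which by construction has at most one vertex of degree $3$ (the junction) and all other vertices of degree $\le 2$, hence is solved optimally by Lemma \ref{lem:completing_on_simple_trees}; the subcases of one or zero boundary robots are strictly easier.

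The remaining point is that these three subschedules can be run concurrently without new collisions. Their regions are vertex-disjoint except for the junction vertex, and — exactly as in the star-graph argument of Theorem \ref{thm:star-hardness} — because all tasks have equal duration one can prepend a bounded number of waiting steps to at most two of the groups so that they occupy the junction in pairwise-disjoint timesteps, changing the makespan by $O(1)$ and hence not affecting optimality. Taking the fastest schedule over all $O(k^2)$ groupings and $O(m^3)$ task partitions, and invoking Lemmas \ref{obs:only_one_robot_on_both_path_and_cycle} and \ref{lem:contiguous-task-assignment}, yields a fastest collision-free task-completing schedule. The running time is the number of configurations times the cost of solving one, namely $O\big(k^2 \cdot m^3 \cdot (k m n + m^2)\big) = O(k^3 m^4 n)$.

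The hard part will be the structural bookkeeping rather than any single computation: one must verify carefully that an optimal schedule really does decompose into the three disjoint pieces described — in particular that the tree $\mathcal{T}$ of Lemma \ref{obs:only_one_robot_on_both_path_and_cycle} always meets the degree hypothesis of Lemma \ref{lem:completing_on_simple_trees}, and that the "contiguous block" assumption for the cycle-only and path-only groups survives the wrap-around of the cycle at the junction (a tadpole analogue of Lemma \ref{lem:contiguous-task-assignment}). The second delicate point is the collision analysis at the junction: one must check that the waiting/staggering trick composes cleanly with the collision-avoidance already enforced inside each subgroup, including when $r_a$ or $r_b$ must pass through the junction several times. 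Once these are pinned down, matching the exact $O(k^3 m^4 n)$ bound is routine arithmetic over the configuration count and the per-configuration cost.
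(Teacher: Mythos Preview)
Your proposal follows essentially the same skeleton as the paper's proof: enumerate $O(k^2)$ choices of the two ``boundary'' robots $r_a,r_b$ and $O(m^3)$ choices of boundary tasks, solve each resulting configuration via the path/cycle/simple-tree subroutines, and take the best. The accounting $O(k^2\cdot m^3\cdot kmn)=O(k^3m^4n)$ matches.

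There is, however, a concrete structural piece that your three-way split misses and that the paper handles explicitly. Once $r_a,r_b$ are fixed (say both starting on the cycle), the robots whose start vertices lie on the cycle arc from $v_a$ to $v_b$ \emph{through} the connecting vertex $v_c$ are forced to complete only tasks on the tail (this is the content of Lemma~\ref{obs:only_one_robot_on_both_path_and_cycle}). These robots are in your ``path-only'' group but they do not start on the tail, so your subproblem (ii) --- ``path-only robots on the tail, by Theorem~\ref{thm:k-robots-partition-equal-length-task}'' --- is not directly a path instance. The paper's fix is a dedicated third sub-instance: it builds an auxiliary path $\mathcal{P}$ by appending to the tail a stub of length $\ell$, and places these cycle-starting robots greedily on the stub at distance equal to their distance to $v_c$ (breaking ties arbitrarily), so that the funnelling through $v_c$ is modelled as a genuine path instance. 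Without this construction your decomposition is incomplete.

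Relatedly, your junction argument --- prepend $O(1)$ waiting steps, as in the star reduction --- is not how the paper proceeds and is not obviously correct here: the three regions are \emph{not} vertex-disjoint away from $v_c$ once the path-only group contains cycle-starting robots, since those robots must traverse part of the $r_a,r_b$ tree to reach the tail. The paper instead arranges the sub-instances so that the relevant robots are already on (or modelled as being on) a path, and relies on the per-instance path/tree algorithms to produce schedules that are collision-free by construction, rather than post-hoc staggering. Your own caveat at the end correctly flags exactly these two points as the delicate ones; they are where your write-up would need to change to line up with the paper.
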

\begin{proof}
    We use a similar approach to both path graphs and cycles. From Observation \ref{obs:only_one_robot_on_both_path_and_cycle}, we can partition the set of robots into 3 sets, those only completing tasks on the cycle, those only completing tasks on the line, and those completing tasks on both. We further partition the tasks between these sets. Let $v_c$ be the \emph{connecting vertex}, i.e. the vertex in $G$ with degree 3. 
    
    If the robots $r_a$  and $r_b$ starting on the cycle are the robots completing tasks on both the cycle and path, then any robot starting on some vertex on the path starting at the starting vertex $v_a$ of $r_a$, ending at the starting vertex $v_b$ of $r_b$, and containing the vertex $v_c$, must only complete tasks on the path. Similarly, the tasks completed by $r_a$ and $r_b$ can be defined by a set of three tasks $t_1$ on vertex $v_1$, $t_2$ on vertex $v_2$ and $t_3$ on vertex $v_3$, with all tasks completed by $r_a$ and $r_b$ corresponding to the tasks located on the tree $\mathcal{T}$ connecting $v_1, v_2, v_3$ and $v_c$ such that the path between $v_i$ and $v_j$ passes through $v_c$. Note that a robot starting on some vertex $\mathcal{T}$ must begin the schedule by moving to the path. Therefore, each selection of $r_a, r_b, t_1, t_2$, and $t_3$ partitions the $k$-\rsp instance into three distinct instances. The first instance contains the path from $t_1$ to $t_2$ that does not include $v_c$ (assuming that $t_3$ is the task located on the path), and all robots starting on the path between $v_a$ and $v_b$ that does not contain $v_c$, which may be computed in $O(k m n)$ time using Theorem \ref{thm:k-robots-partition-equal-length-task}.
    
    Second, is the instance on the tree $\mathcal{T}'$, formed by adding to $T$ any edges, and corresponding vertices, on the path between $v_a$ and $v_b$ via $v_c$, which can be solved in $O(m^2)$ time from Lemma \ref{lem:completing_on_simple_trees}.
    
    Finally, we have the instance on the tree $\mathcal{T}'$ formed by the union of the path between $v_a$ and $v_b$ via $v_c$, and the path in $G$. We compute the solution to the last instance by constructing a path $\mathcal{P}$ formed by taking the path $P$ in $G$, and extending with a set of edges $(v_c, v_1), (v_1, v_2), \dots, (v_{\ell - 1}, v_{\ell})$. We place the robots in a greedy manner, with the robot $r$ with the start vertex $v_r$ in the original instance starting at $v_i$, where $i$ the smallest index greater than or equal to the distance between $v_r$ and $v_{c}$ in $\mathcal{T}'$ at which no robot has been placed. In this way, we provide an arbitrary tie breaking mechanism for robots entering the path, such that the number of robots on the path $P$ at any given timestep is equal to the maximum possible for any collision free schedule.
    
    As we can compute the solution to the last instance in $O(k m n)$ time, we therefore can, for a given selection of $r_a, r_b, t_1, t_2, t_3$, construct the fastest collision free schedule in $O(k m n)$ time, noting that $m \leq n$. As there are $m^3 k^2$ possible selections, we require $O(k^3 m^4 n)$ to find a fastest task completing schedule. Note that, if the fastest solution involves a single robot starting on the path completing some task on the cycle, we will find this by the selection of a second robot on the cycle, which will not complete any tasks in the path within the induced sub-instance, completing the proof.

\end{proof}

\section{Conclusion}

We have shown that our definition of $k$-\rsp is hard even on highly constrained classes of graphs while being solvable, with equal duration tasks, for path, cycle, and tadpole graphs as well as $k$-approximable for tasks of any length on path graphs. While these results paint a strong picture of the complexity of this problem, we are left with several open questions. The most direct is as to whether our approximation algorithm for path graphs can be improved or if an optimal algorithm can be found. We conjecture that a polynomial time algorithm exists for this setting; however, at present, no such algorithm has been found. The second natural direction is to look at the remaining classes of graphs that have not been covered by our existing results. The most interesting of these would be lattice graphs, starting with $n \times m$ grids. Such graphs can be used to simulate a wide variety of settings, while still not fitting into any of the classes that are known to be NP-hard. On the other hand, these still provide more complexity than our existing problems due to an exponentially greater number of paths that each robot can take without collision.\looseness=-1

\bibliography{references}
\bibliographystyle{elsarticle-num}
\end{document}